\documentclass{article}

    \PassOptionsToPackage{numbers, compress}{natbib}



\usepackage[final]{neurips_2025}


\usepackage[utf8]{inputenc} 
\usepackage[T1]{fontenc}    
\usepackage{hyperref}       
\usepackage{url}            
\usepackage{booktabs}       
\usepackage{amsfonts}       
\usepackage{nicefrac}       
\usepackage{microtype}      
\usepackage{xcolor}         

\usepackage{amsmath,amssymb,amsthm}
\usepackage{dsfont}
\usepackage{graphicx}
\usepackage{algorithm}
\usepackage{algorithmic}
\usepackage{natbib}
\usepackage{caption}
\usepackage{subcaption}
\usepackage{tocloft}
\usepackage{enumitem}
\usepackage{wrapfig}

\newcommand{\appendixtocname}{APPENDICES}
\newlistof{appendixtoc}{apt}{\appendixtocname}


\newtheorem{theorem}{Theorem}
\newtheorem{lemma}[theorem]{Lemma}
\newtheorem{proposition}[theorem]{Proposition}
\newtheorem{corollary}[theorem]{Corollary}


\DeclareMathOperator*{\argmax}{arg\,max}

 \title{Place Cells as Multi-Scale Position Embeddings: Random Walk Transition Kernels for Path Planning}

%

\usepackage{relsize}
\definecolor{customlinkcolor}{HTML}{2774AE} 
\definecolor{customcitecolor}{HTML}{2774AE} 
\hypersetup{
    colorlinks=true,
    linkcolor=customlinkcolor,
    anchorcolor=black, 
    citecolor=customlinkcolor,
    urlcolor=customlinkcolor 
}

\makeatletter
\renewcommand\@fnsymbol[1]{}
\makeatother

\author{
  Minglu Zhao$^{1*}$\thanks{$^\star$Equal Contribution.}\thanks{Project page: \href{https://sites.google.com/view/place-cells}{https://sites.google.com/view/place-cells}.}\thanks{Code: \href{https://github.com/mingluzhao/Place_Cell}{https://github.com/mingluzhao/Place\_Cell}.}, Dehong Xu$^{1*}$, Deqian Kong$^{1*}$, Wen-Hao Zhang$^2$,Ying Nian Wu$^1$\\
    \normalfont{$^1$UCLA\quad{}}
    \normalfont{$^2$UT Southwestern Medical Center\quad{}}
}

\begin{document}

\maketitle

\begin{abstract}

The hippocampus supports spatial navigation by encoding cognitive maps through collective place cell activity. We model the place cell population as non-negative spatial embeddings derived from the spectral decomposition of multi-step random walk transition kernels. In this framework, inner product or equivalently Euclidean distance between embeddings encode similarity between locations in terms of their transition probability across multiple scales, forming a cognitive map of adjacency.
The combination of non-negativity and inner-product structure naturally induces sparsity, providing a principled explanation for the localized firing fields of place cells without imposing explicit constraints. The temporal parameter that defines the diffusion scale also determines field size, aligning with the hippocampal dorsoventral hierarchy.
Our approach constructs global representations efficiently through recursive composition of local transitions, enabling smooth, trap-free navigation and preplay-like trajectory generation. Moreover, theta phase arises intrinsically as the angular relation between embeddings, linking spatial and temporal coding within a single representational geometry.

\end{abstract}

\section{Introduction}
\label{sec:intro}

Place cells in the hippocampus are central to spatial cognition, firing selectively at specific locations as animals navigate their environment \citep{OKeefe1971,Okeefe1978}. Rather than modeling individual place fields, we propose viewing place cell populations as non-negative position embeddings derived from spectral decomposition of multi-step transition kernels that collectively encode spatial relationships. This population-level approach captures the hippocampus’s role in forming cognitive maps, integrating metric and topological properties for flexible navigation \citep{Tolman1948}.

Hippocampal place cells exhibit dynamic behaviors, adapting firing fields to environmental changes \citep{Muller1987}, scaling along the dorsoventral axis \citep{Kjelstrup2008}, and engaging in preplay before exploration \citep{Dragoi2006}. These properties suggest a multi-scale, adaptive representation that supports navigation across diverse contexts—from narrow passages to open terrains. The hippocampus’s ability to predict novel paths, as in preplay, and maintain robust navigation under environmental deformations further underscores its computational sophistication \citep{Dragoi2011,Olafsdottir2015,Moser2008}.

We formalize these insights by modeling place cell population as vector embeddings derived from a multi-step symmetric random walk. The transition matrix admits spectral decomposition, enabling construction of non-negative embeddings \( h(x, \tau) \) such that the inner product between embeddings approximates the normalized transition probability:
\(
\langle h(x, \tau), h(y, \tau) \rangle = q(y|x, \tau),
\)
where \( h(x, \tau) \) is the embedding at location \( x \), and \( q(y|x, \tau) \) is the symmetric transition probability over time step \( \tau \). Remarkably, combining this inner product structure with non-negativity and orthogonality constraints induces emergent sparsity—the embeddings naturally develop disjoint support sets, explaining why place cells exhibit localized firing fields without explicit regularization. 
The time parameter \(\sqrt{\tau}\) defines a spatial scale hierarchy, mirroring hippocampal dorsoventral scaling \citep{Kjelstrup2008}. $q(y|x, \tau)$ defines spatial adjacency between $x$ and $y$ at scale or resolution $\sqrt{\tau}$, and the pairwise adjacency relationships $(q(y|x, \tau), \forall x, y)$ are reduced into individual embeddings $(h(x, \tau), \forall x)$ that collectively form a multi-scale, Euclideanized, and sparsified cognitive map of the environment. The Euclideanization emerges from inner product expansion of the transition kernel, the sparsification arises naturally from non-negativity of the place cell responses and the orthogonality between embeddings with zero inner product, and the multi-scale hierarchy enables adaptive navigation across spatial resolutions.
Efficient matrix squaring (\( P_{2\tau} = P_\tau^2 \)) computes global transitions from local ones (\( P_1 \)) without past trajectory memorization, enabling preplay-like shortcut detection \citep{Norris1998,Dragoi2011}.

Our framework uses gradient ascent on \( q(y|x, \tau) = \langle h(x, \tau), h(y, \tau)\rangle\) with adaptive scale selection, choosing the time scale maximizing the gradient for trap-free, smooth trajectories. This produces robust navigation with properties like boundary avoidance, diffraction-like passage guidance, aligning with hippocampal navigation \citep{McNaughton2006,Moser2008,Tolman1948}. Continuous interpolation ensures smooth gradient fields, supporting natural paths. Due to the Euclideanization, path planning amounts to gradient descent on Euclidean distance at the selected scale, enabling what can be called ``straight forward'' path planning. 

Beyond spatial coding, we propose a novel theoretical framework where theta phase naturally emerges from the population embedding structure, unifying spatial representation with temporal dynamics and offering a principled account of theta phase precession. 

Appendix \ref{app:related_work} provides an extensive review of related work. 

Our contributions include: (1) reconceptualizing place cells as non-negative embeddings from spectral decomposition encoding transition probabilities at multiple scales; (2) demonstrating emergent sparsity explaining localized place fields; (3) modeling multi-scale, Euclideanized, and sparsified cognitive maps via time scale parameter \(\sqrt{\tau}\) for transition probabilities; (4) employing efficient matrix squaring to build up multi-scale spatial relationships; (5) introducing adaptive gradient ascent for Euclideanized path planning at selected scale; (6) proposing theta phase formulation from population embeddings; and (7) demonstrating biological plausibility through properties like preplay. Bridging connectionist models \citep{Banino2018,Whittington2020} and cognitive map theories \citep{Okeefe1978,Moser2008}, our framework offers a scalable, biologically inspired model of hippocampal spatial navigation.


\section{Method}

\subsection{Multi-Step Random Walk Transition Kernel}
\label{sec:P1}
The foundation of our approach is a symmetric random walk on a discrete lattice over the environment (e.g., a $40 \times 40$ lattice), with a subset of lattice points belonging to the obstacles. This random walk serves as a mapping policy rather than a goal-reaching policy. Remarkably, this purely random exploration policy leads to optimal path planning without explicit policy optimization. We use $\tau$ to denote the time step of this random walk mapping policy, to avoid confusion with the time $t$ of the planned trajectory. $\tau$ plays the role of scale, which is adaptively selected during path planning. 

For a location $x = (i, j)$ on a 2D lattice, let $N(x)$ be the set of its unobstructed neighbors, i.e., neighbors that do not belong to obstacles (e.g., 4 nearest neighbors), we define the one-step transition probabilities as:
\begin{itemize}
    \item For each unobstructed neighbor $y$ of $x$: $p(y|x, \tau=1) = p_{\text{move}}$ (e.g., 1/4 in the case of 4 nearest neighbors)
    \item For self-transition: $p(x|x, \tau=1) = 1 - |N(x)| \cdot p_{\text{move}}$
\end{itemize}
where $|N(x)|$ is the number of unobstructed neighbors of $x$. A critical property of this formulation is that the transition probabilities are symmetric, meaning $p(y|x, \tau=1) = p(x|y, \tau=1)$ for all locations $x$ and $y$. This symmetry ensures that the random walk process preserves the bidirectional nature of spatial relationships, which is essential for creating a well-defined proximity metric. 

The above random walk defines $\tau$-step symmetric transition kernel $p(y|x, \tau)$ over time step $\tau$, which measures the spatial adjacency between $x$ and $y$ at a spatial scale or resolution captured by $\tau$.

\subsection{Heat Diffusion, Geodesic Distance and Topological Connectivity}
\label{sec:heat}

Our discrete random walk model establishes a connection to the continuous heat equation. For small spatial discretization $dx$ and temporal discretization $dt$ with $dx = \sqrt{dt}$, our discrete random walk converges to the heat diffusion equation with reflecting boundary conditions as $dx \rightarrow 0$ \citep{Oksendal2003, Grigoryan2009}:
\begin{equation}
\frac{\partial p(y|x, \tau)}{\partial \tau} = \alpha \nabla^2 p(y|x, \tau) \quad \text{on} \quad \Omega \setminus \Omega_{\text{obstacles}}
\end{equation}
\begin{equation}
\frac{\partial p(y|x, \tau)}{\partial n} = 0 \quad \text{on} \quad \partial\Omega_{\text{obstacles}}
\end{equation}
where $\Omega$ is the whole region, $\Omega_{\text{obstacles}}$ is the region of obstacles, and $\partial\Omega_{\text{obstacles}}$ is its boundary. $\alpha$ is the diffusion coefficient (different one-step transition $p(y|x, \tau=1)$ in discrete case leads to different $\alpha$, e.g., $\alpha = 1/4$ for 4-nearest neighbor random walk). 

A fundamental result from heat diffusion theory relates the short-time behavior of the heat kernel to geodesic distance. For small values of $\tau$, the heat kernel has the asymptotic form \citep{Varadhan1967, Hsu1995}:
\begin{equation}
p(y|x, \tau) \approx \frac{1}{4\pi \alpha \tau} \exp\left(-\frac{d_g^2(x,y)}{4\alpha \tau}\right) 
\end{equation}
where $d_g(x,y)$ is the geodesic distance. For open domains, $d_g(x, y)$ becomes Euclidean distance, and $p(y|x, \tau) \sim {\cal N}(x, 2\alpha \tau)$, a Gaussian distribution with variance $2\alpha \tau$ or standard deviation $\sqrt{2\alpha \tau}$, demonstrating that $\sqrt{\tau}$ correspond to the spatial scale. We take square root of $\tau$ to emphasize this scaling relationship between time and space. 

This result demonstrates that $d_t(x,y) = -\tau \log q(y|x,\tau)$ approximates the squared geodesic distance for small values of $\tau$, providing a fundamental connection between random walks and geodesic distances in complex environments.

As $\tau$ increases, the transition probability incorporates additional information about path multiplicity and global connectivity, creating a multi-scale representation of spatial relationships. In particular, for large $\tau$, the transition probability is dominated by eigenvalues that are close to 1, whose eigenvectors depend on topological connectedness rather than local geometry. 

Appendix \ref{app:heat_equation} provides detailed background on heat equation. Appendix \ref{app:navigation} explains topological properties for large $\tau$.

\subsection{Place Cells as Non-negative Spectral Embeddings}

We formalize place cell population as vector embedding $h(x, \tau) \in \mathbb{R}^n$, where $n$ is the number of place cells (e.g., $n = 500$). The inner product between embeddings approximates the normalized transition probability kernel:
\begin{equation}
\langle h(x, \tau), h(y, \tau) \rangle \approx q(y|x, \tau) \label{eq:1}
\end{equation}
where $q(y|x, \tau) = {p(y|x, \tau)}/{\sqrt{p(x|x, \tau) \cdot p(y|y, \tau)}}$ is the normalized transition probability, so that $\|h(x, \tau)\| = 1$ due to normalization. For each cell $i$,  $h_i(x, \tau) \geq 0$ for biological plausibility, and $h_i(x, \tau)$ is the response map or profile of cell $i$ at spatial scale $\sqrt{\tau}$.

The above formulation can be derived through spectral decomposition of the transition matrix. Since the one-step transition matrix $P_1$ is symmetric by construction, its powers $P_\tau = P_1^\tau$ admit an eigendecomposition \citep{Meyer2000}:
\(
P_\tau = Q \Lambda^\tau Q^T,
\)
where $Q$ is orthogonal ($Q^T Q = I$) and $\Lambda = \text{diag}(\lambda_1, \ldots, \lambda_n)$ contains eigenvalues $0 \leq \lambda_i \leq 1$ (assuming the random walk is irreducible and aperiodic). 

From this spectral decomposition, we can construct a spectral embedding by defining:
\(
H_i(x, \tau) = \lambda_i^{\tau/2} Q_i(x)
\)
where $Q_i$ is the $i$-th column of $Q$. This yields an exact representation of the transition probability through inner products:
\begin{equation}
p(y|x, \tau) = \sum_{i} H_i(x, \tau) H_i(y, \tau) = \langle H(x, \tau), H(y, \tau) \rangle
\end{equation}
where $H_i(x, \tau)$ is the $i$-th element of the vector $H(x, \tau)$. 
For the normalized transition probability $q(y|x, \tau)$, we define normalized embeddings:
\begin{equation}
h_{\text{spec}}(x, \tau) = \frac{H(x, \tau)}{\|H(x, \tau)\|} = \frac{H(x, \tau)}{\sqrt{p(x|x, \tau)}}
\end{equation}
which satisfy:
\begin{equation}
\langle h_{\text{spec}}(x, \tau), h_{\text{spec}}(y, \tau) \rangle = \frac{p(y|x, \tau)}{\sqrt{p(x|x, \tau) \cdot p(y|y, \tau)}} = q(y|x, \tau)
\end{equation}
$p(x|x, \tau)$ is constant in the open field and is smooth in general, so $q$ is essentially a scaled version of $p$. 

However, these spectral embeddings may contain negative components, conflicting with the biological constraint that neural firing rates must be non-negative. Horn's theorem \citep{Horn1954, Berman1994}, which is built upon the above spectral decomposition, guarantees the existence of a non-negative $h(x, \tau)$, such that $\langle h(x, \tau), h(y, \tau) \rangle = q(y|x, \tau)$ for non-negative matrix factorization. 

This formulation represents a fundamental shift from modeling individual place cells to modeling the place cell population as distributed position embedding. The embedding vector $h(x, \tau)$ represents the firing rates of all place cells at location $x$ for spatial scale $\sqrt{\tau}$, capturing the idea that it is the pattern across the population—not the activity of any single cell—that encodes location.

The time parameter $\sqrt{\tau}$ serves as a fundamental unit of spatial resolution or scale, with larger values producing broader, more diffuse representations, and smaller values producing more localized representations. This naturally mirrors the variation in place field sizes observed along the dorsoventral axis of the hippocampus \citep{Kjelstrup2008, Strange2014}.

See Appendix \ref{app:spectral_analysis} for more details on spectral analysis. Appendix \ref{app:open_field} provides analytical results for open field, where $q(y|x, \tau)$ is Gaussian, and elements of $h(x, \tau)$ exhibit Gaussian profiles over $x$.

\subsection{Matrix Squaring, Learning, and Continuation}

Let $P_1$ be the one-step transition matrix for the random walk on a discrete lattice. $P_1$ depends on obstacles in the environment and amounts to local one-step exploration, as detailed in subsection \ref{sec:P1}. 

We calculate $P_\tau$ for a discrete set of $\tau$, $\mathcal{T} = \{\tau = 2^k, k = 1, ..., K\}$, via $P_{2\tau} = P_\tau^2$. The matrix squaring is very efficient for calculating $P_\tau$ for $\tau \in \mathcal{T}$, and these $P_\tau$ correspond to explorations of different spatial scales $\sqrt{\tau}$, where the adjacent spatial scales in $\mathcal{T}$ have a ratio $\sqrt{2}$, mirroring grid cell module scaling (\(\sim\)1.4--1.7) \citep{Stensola2012}. In addition to matrix squaring, one can design any discrete sequence \(\mathcal{T} = \{ t_k, k = 1, \ldots, K \}\), and calculate $P_\tau$ for $\tau \in \mathcal{T}$ using matrix multiplication. 

We learn a separate population of place cells $h(x, \tau)$ for each $\tau \in \mathcal{T}$, by minimizing the least squares error:
\begin{equation} \label{eq:loss}
\mathcal{L} = \sum_{x, y} \left[ q(y|x, \tau) - \langle h(x, \tau), h(y, \tau) \rangle \right]^2 
\end{equation}
We learn $h(x, \tau)$ over the discrete lattice, where $\sum_{x, y}$ in (\ref{eq:loss})  is over all pairs of points on the lattice. 
We optimize this objective using the AdamW optimizer \citep{loshchilov2017decoupled}, where after each iteration, for each $x$,  we set the negative elements of $h(x, \tau)$ to 0, and then normalize $h(x, \tau)$ so that $\|h(x, \tau)\| = 1$. 

The learning reduces pairwise adjacency relationships $(q(y|x, \tau), \forall x, y)$ into individual embeddings $(h(x, \tau), \forall x)$, which collectively form a map of the environment. 

After learning, we can use bi-linear interpolation to make $h(x, \tau)$ a continuous map over $x$. As a result, $q(y|x, \tau) = \langle h(x, \tau), h(y, \tau)\rangle$ also becomes continuous, approximating the transition kernel of the continuous heat equation. The continuous $q(y|x, \tau)$ can then elegantly guide path planning in continuous space instead of discrete lattice.

\subsection{Euclideanized Cognitive Map}

Geometrically, $\langle h(x, \tau), h(y, \tau) \rangle$ is the cosine of the angle between the normalized vectors $h(x, \tau)$ and $h(y, \tau)$. Moreover, we have 
\begin{equation}
\frac{1}{2} \|h(x, \tau) - h(y, \tau)\|^2 = 1 - \langle h(x, \tau), h(y, \tau) \rangle = 1 - q(y|x, \tau) \label{eq:2}
\end{equation}
That is, the angle or the Euclidean distance between $h(x, \tau)$ and $h(y, \tau)$ encodes proximity or adjacency between $x$ and $y$. Thus, our method geometrizes transition probability. 

$(h(x, \tau), \forall x)$ is a 2D manifold in the high-dimensional embedding space. This 2D manifold is an embedding of the 2D physical space. Because of the flexibility endowed by the high-dimensional embedding space, the path between $x$ and $y$ on the manifold is essentially ``Euclideanized'' even though the physical path is far from being Euclidean.

\subsection{Emergent Sparsity from Non-negativity and Orthogonality}

The conjunction of non-negativity and orthogonality induces emergent sparsity in the representational code. 
Let $h(x, \tau) \in \mathbb{R}_+^n$ denote the representation at position $x$ and scale $\sqrt{\tau}$, such that
    $\langle h(x, \tau), h(y, \tau) \rangle = q(y|x, \tau)$,
where $q(y|x, \tau)$ measures spatial adjacency at scale $\sqrt{\tau}$.
If two positions $x$ and $y$ are non-adjacent ($q(y|x, \tau)=0$), the corresponding vectors must be orthogonal. 
Because all components of $h(x, \tau)$ are non-negative, orthogonality implies disjoint support—no component can be active in both representations. 
Consequently, the joint requirement of non-negativity and orthogonality forces most components to be zero, yielding sparse, localized activity patterns. As a result, each place cell fires at a localized place, and the population of place cells collectively tile the environment. The symbolic representation thus emerges automatically from population-based representation. 

The scale parameter $\tau$ modulates this sparsity: 
for small $\tau$, most position pairs are orthogonal, producing highly localized ``place fields''; 
for large $\tau$, overlaps increase and place fields broaden. 
Thus, the locality of place-cell activation follows directly from the geometry of non-negative, inner-product spectral embeddings. Appendix~\ref{app:sparsity} provides technical details.

\subsection{Straight Forward Path Planning with Adaptive Scale Selection}
\label{sec:adaptive_navigation}

We use adaptive gradient following for goal-directed navigation. When navigating from a current position $x$ to a target location $y$, we select the next position $x_{\text{next}}$ from the neighborhood $\partial(x)$ where $\partial (x) = \{z: z = x + \Delta r(\cos \theta, \sin \theta)\}$,  $\Delta r$ is the step size, and $\theta$ is discretized into equally spaced direction in $[0, 2\pi)$. We use $\partial(x)$ for continuous $x$ for path planning to differentiate from $N(x)$ in the discrete lattice for random walk. 

The path planning algorithm is as follows: 
\begin{itemize}
    \item Compute the gradient of the normalized transition probability for each neighbor $z \in \partial (x)$ and each scale $\tau$:
    \begin{equation}
    \Delta(z, \tau) = q(y|z, \tau) - q(y|x, \tau) = \langle h(y, \tau), h(z, \tau)\rangle - \langle h(y, \tau), h(x, \tau)\rangle
    \end{equation}   
    \item Select the scale $\tau^*$ that provides the strongest directional signal:
    \begin{equation}
    \tau^* = \argmax_{\tau \in \mathcal{T}} \max_{z \in \partial(x)} \Delta(z, \tau)
    \end{equation}   
    \item Choose the next position that maximizes the gradient at the selected scale:
    \begin{equation}
    x_{\text{next}} = \argmax_{z \in \partial(x)} \Delta(z, \tau^*)
    \end{equation}
\end{itemize}
Note that in the above algorithm, $x$ and $z$ are continuous, because $h(x, \tau)$ is made continuous in $x$ with bi-linear interpolation after learning on discrete lattice. 

 $\Delta(z, \tau)$ measures the reduction in the angle or squared Euclidean distance. The path planning seeks maximal reduction in angle or Euclidean distance. Therefore we call it the straight forward path planning in the Euclideanized embedding space, where the vector $h(x, \tau)$ rotates straightly to $h(y, \tau)$ on the path.

This approach selects the scale $\tau^*$ that provides the clearest guidance for the current navigation step. Intuitively, larger scales provide better guidance for distant goals, while smaller scales offer more precise navigation for nearby goals. The adaptive scale selection mechanism automatically finds this optimal scale at each step, similar to choosing the most appropriate ``ruler'' for measuring at the current distance.

Now consider the idealized continuous limit where $\Delta r \rightarrow 0$, $\theta$ is continuous, and $\tau$ is continuous. The path planning algorithm follows the gradient $\nabla_x q(y|x, \tau) = \nabla_x q(x|y, \tau)$ where $q(y|x, \tau) = q(x|y, \tau)$ due to symmetry. The planed trajectory is the gradient ascent flow: $dx(t)/dt = \nabla_x q(x(t)|y, \tau(t))$, where $t$ is the time on the planed trajectory, $\tau(t) = \arg\max_\tau \|\nabla_x q(x(t)|y, \tau)\|$ is the optimal scale at time $t$ with the maximal gradient. 

The gradient-based navigation framework has the following key properties that ensure computational efficiency and biological plausibility:
\begin{enumerate}[leftmargin=1.2em]
    \item Adaptive scale selection dynamically adjusts \( \sqrt{\tau^* }\propto d(x, y) \) for precise navigation in the open environment, mirroring hippocampal spatial tuning \citep{McNaughton2006}.
    \item The unimodal smooth gradient field \(\nabla_x q(x|y, \tau)\), with a unique maximum at the goal, ensures trap-free paths, aligning with hippocampal navigation \citep{McNaughton2006, Moser2008}.
    \item Planned paths match the shortest path for small \( \tau \), but prioritize topological connectivity for large \( \tau \), reflecting cognitive map robustness \citep{Moser2008, Tolman1948}.
    \item Near obstacles, \(\nabla_x q(x|y, \tau)\) flows parallel to boundaries, preventing collisions, akin to hippocampal obstacle avoidance \citep{McNaughton2006}.
    \item Diffraction-like patterns guide trajectories through passages, resembling hippocampal maze navigation \citep{McNaughton2006, Moser2008}.
    \item Topological invariance maintains navigation under environmental deformations, mirroring hippocampal place cell encoding \citep{Tolman1948, Moser2008}.
    \item Matrix squaring from local transitions predicts shortcuts, embodying hippocampal preplay \citep{Dragoi2011, Olafsdottir2015}.
\end{enumerate}
These properties, detailed in Appendix~\ref{app:navigation}, underpin the framework’s alignment with neural mechanisms.

 \subsection{Theta-Phase Procession Based on Angle-Phase Duality}
 
 The hippocampus not only encodes spatial relationships but also organizes temporal dynamics through theta phase precession, where place cells fire at specific phases of the theta rhythm as an animal traverses their fields \citep{OKeefe1993, Skaggs1996}. We extend our population embedding framework to model this phenomenon.
 
 As explained above, the non-negativity and inner product structure of the position embeddings induces localized sparse patterns, i.e., each place cell only fires at a localized field around a place, and together the place fields of all the place cells tile the environment. Let $\mu_i = \arg\max_x h_i(x, \tau)$, i.e., the center of the place field for cell $i$. 
The theta phase of cell $i$ can be defined in terms of the angle between $h(x, \tau)$ and  $h(\mu_i, \tau)$. As $x$ approaches $\mu_i$, the angle changes from $\pi/2$ to  0, and as $x$ moves away from $\mu_i$, the angle changes from 0 to $\pi/2$. Appendix \ref{app:theta_phase} provides details. 

\subsection{Integrating Grid Cells}

For scientific reductionism, we focus on place cells without incorporating grid cells. A natural extension would explore the relationship $h(x, \tau) = W(\tau) g(x)$, where $g(x)$ represents the vector of grid cell activations and $W(\tau)$ is a learned transformation matrix, potentially unifying our framework with the complementary roles of place and grid cells in spatial representation \citep{Moser2008}. Appendix \ref{app:grid_cells} provides the formulation.

\section{Experiments}
\label{sec:experiments}

We design experiments to evaluate both the biological plausibility of our position embedding framework and its functional capabilities in path planning. All experiments are conducted in a simulated environment. We first examine the positional representation and population density profiles in an open field. Then we create more complex environments by adding obstacles. For implementation details, see Appendix~\ref{appen:training}; For additional experiment visualizations, see Appendix~\ref{appen:plots}.

\begin{figure}[h]
    \centering
    \includegraphics[width=\textwidth]{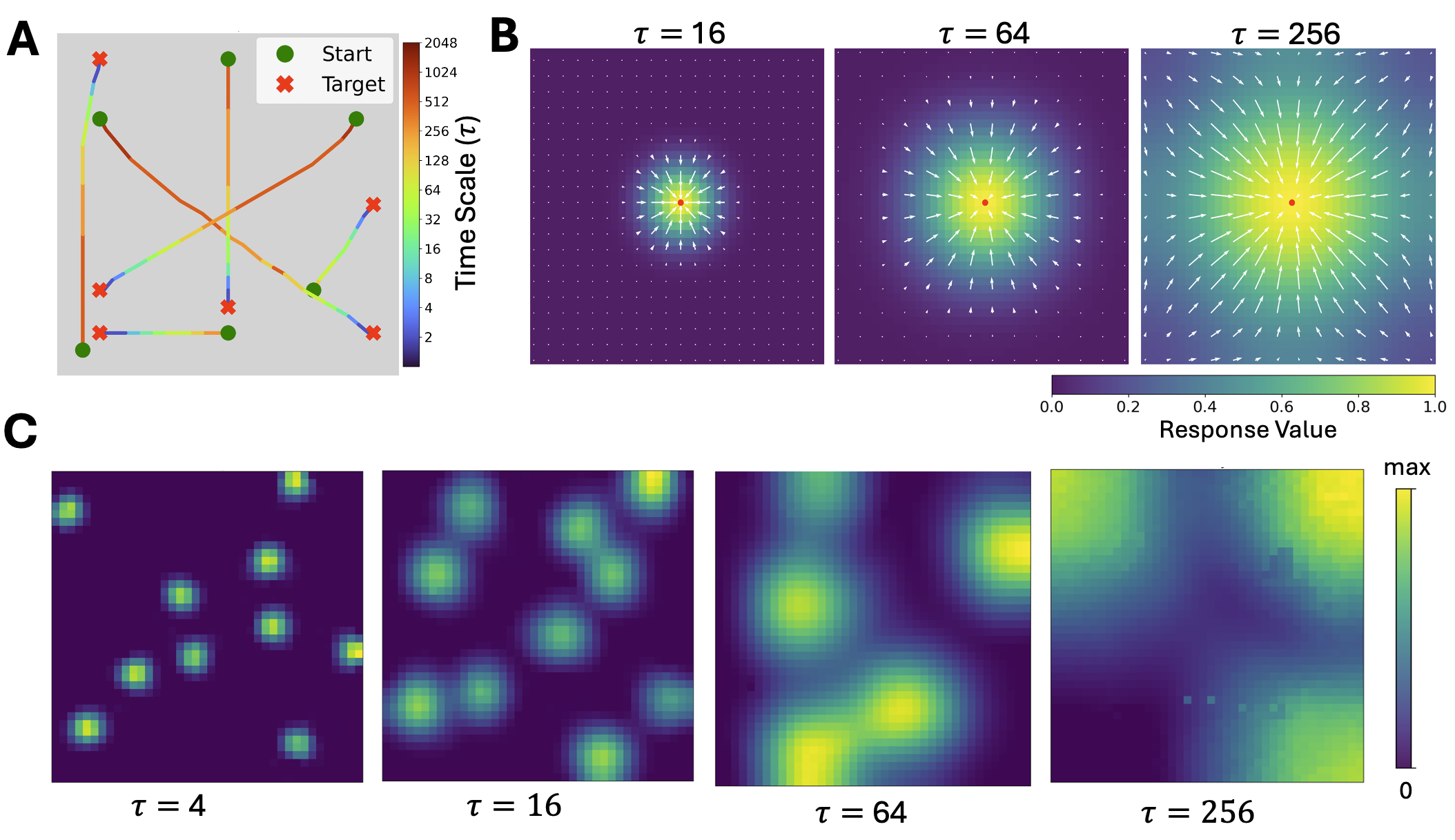}
    \caption{Place Cell Representations and Navigation in Open Field Environment. (A) Goal-directed path planning trajectories with adaptive scale selection (selected scale is color coded, red for big scale and blue for small scale). (B) Normalized transition probability kernels $q(y|x, \tau)$ at multiple scales with gradient vector fields. $y$ is fixed at the center of the environment.  (C) Learned activation patterns of $h(x, \tau)$ at different scales across randomly chosen cells, exhibiting Gaussian-like firing fields centered at specific locations within the open environment. }
    \vspace{-10pt}
    \label{fig:open_field}
\end{figure}

\subsection{Place Cell Representations in Open Field}

We begin our numerical experiments in a simple open field environment to demonstrate fundamental properties. The environment consists of a $40 \times 40$ lattice grid. For the transition kernel, we employ a $3\times3$ neighborhood, meaning each position has at most 8 neighboring points, with transition probability $p_{\rm move}=1/9$ distributed uniformly among neighbors as well as self. Within this environment, we learn $h(x, \tau)$ embeddings across all lattice points. For hyperparameters, the total number of place cells is set to $500$. The temporal dynamics are captured across multiple scales using the time parameter $\tau=2^k$, where $k=1,2,...,11$, allowing us to analyze system behavior across spatial scales. 

\subsubsection{Multi-Scale Transitions}

To evaluate the normalized transition probability kernel $q(y|x, \tau)$, we visualize the learned Gaussian-like patterns across multiple temporal scales in Figure~\ref{fig:open_field} (B). Each panel depicts the transition probability distribution for time scales $\tau=16, 64, 256$ with position $y$ fixed at the environment center.

The gradient fields reveal a critical temporal dependency: as $\tau$ increases, gradient magnitudes intensify for locations distant from the center, while maintaining directional integrity toward the target. This multi-scale property is particularly significant for path planning applications, as it enables a multi-resolution navigational framework. 


\subsubsection{Spatial Activation Patterns}

Our model is trained by minimizing Equation~\ref{eq:loss}. The optimization results demonstrate remarkable fidelity in approximating normalized transition probabilities through inner products of population embedding vectors with correlation coefficients above $0.9$ for all scales. Figure~\ref{fig:open_field} (C) illustrates the learned activation patterns of $h(x, \tau)$ at scales $\tau=4, 16, 64, 256$ for randomly chosen cells. The spatial receptive fields of individual units emerge naturally from our training process, exhibiting the characteristic Gaussian-like firing fields centered at specific locations within the open environment. This spatial tuning closely resembles the well-documented properties of hippocampal place cells observed in rodent navigation studies~\citep{OKeefe1971,Okeefe1978}. 


\subsubsection{Path Planning and Adaptive Scale Selection}

To evaluate the navigational capabilities of our model, we implemented the gradient-based path planning. We randomly selected start and target locations within the open field environment. At each step, the agent evaluates potential next positions by sampling orientation $\theta$ from $36$ equally spaced directions within $[0, 2\pi)$ and a fixed radius $\Delta r$ of one grid unit. Even though our model is learned in a $40 \times 40$ discrete lattice, in path planning, we can reach continuous positions $x$, where the representation of its location $h(x, \tau)$ is calculated by linear interpolation of the 4 closest neighbors. Movement direction is determined by maximizing the gradient $q(y|z, \tau)-q(y|x, \tau)$ where $y$ is the target location, $x$ is the current position, and $z$ represents each candidate next position. And for each step, we choose the $z$ with the largest gradient. Our results in Figure~\ref{fig:open_field} (A) demonstrate that the learned model consistently generates near-optimal trajectories. 

A key feature of our position embedding model is the adaptive scale selection mechanism, where $\tau^*$ is selected with maximal gradient. The mechanism is illustrated in Figure~\ref{fig:open_field} (A), where different colors represent the selected time scale $\tau^*$ at each navigational step. This reveals a systematic progression from coarse to fine scales as the agent approaches its goal. This pattern emerges naturally from the gradient fields associated with different scales. 

When navigating to distant goals, larger scales ($\tau \geq 128$) are selected, which provides clearer guidance for long-range planning. As the agent reaches medium distances from the goal, the selected scales transition to intermediate values ($\tau = 16$ to $\tau = 64$), balancing directional guidance with increasing spatial resolution. In the final approach phase, the system converges on the smallest available scales ($\tau = 2$ to $\tau = 8$), which offered the most precise local guidance.

This adaptive scale selection mirrors the progressive engagement of different regions along the dorsoventral axis of the hippocampus during navigation, as observed in rodent studies \citep{Kjelstrup2008, Strange2014}.





\subsection{Place Cells in Complex Environments}

To investigate the robustness and adaptability of our method, we extend our analysis to complex environmental geometries that more closely resemble naturalistic navigation scenarios. These environments incorporate obstacles and boundaries that fundamentally alter the adjacency relationships between locations, requiring the model to learn representations that respect environmental constraints rather than simple Euclidean distances. 

\subsubsection{Environment and Experiment Setup}

To evaluate performance across different environments, we implement three distinct mazes. Detailed visualization is shown in Figure~\ref{fig:maze_env}. 
\begin{itemize}
\item U-shaped Maze: A corridor structure with a single 180-degree turn, creating a simple non-convex navigation challenge.
\item S-shaped Maze: A serpentine corridor with multiple turns, requiring longer detour paths around obstacles.
\item Four-room Maze: A compartmentalized environment with four chambers connected by narrow doorways, representing a hierarchically structured space.
\end{itemize}
To accommodate environmental complexity, we modify our random walk dynamics to incorporate spatial constraints imposed by obstacles and boundaries. For any location $x$, transitions to neighboring locations $y$ that are prohibited by the obstacles setting $p(y|x, \tau=1) = 0$ for these forbidden transitions. Consequently, the self-transition probability is adjusted to maintain normalization $p(x|x, \tau=1) = 1 - |N(x)| \cdot p_{\text{move}}$. The key difference between open field and obstacle-containing environments lies in the value of $N(x)$. In open fields, $N(x)$ consistently includes all 8 neighboring locations (except at boundaries). With obstacles, $N(x)$ represents only the subset of accessible neighboring locations from position $x$, which could be smaller than $8$. We maintain $p_{\rm move}=1/9$ across all environments for consistency. Despite these modifications to the underlying transition dynamics, our learning approach remains consistent. We compute the normalized transition probabilities $q(y|x, \tau)$ based on these constraint-respecting dynamics and train our model to learn the embedding function $h(x, \tau)$ using the same objective function as in the open field. 




\begin{figure}[h]
    \centering
    \includegraphics[width=\textwidth]{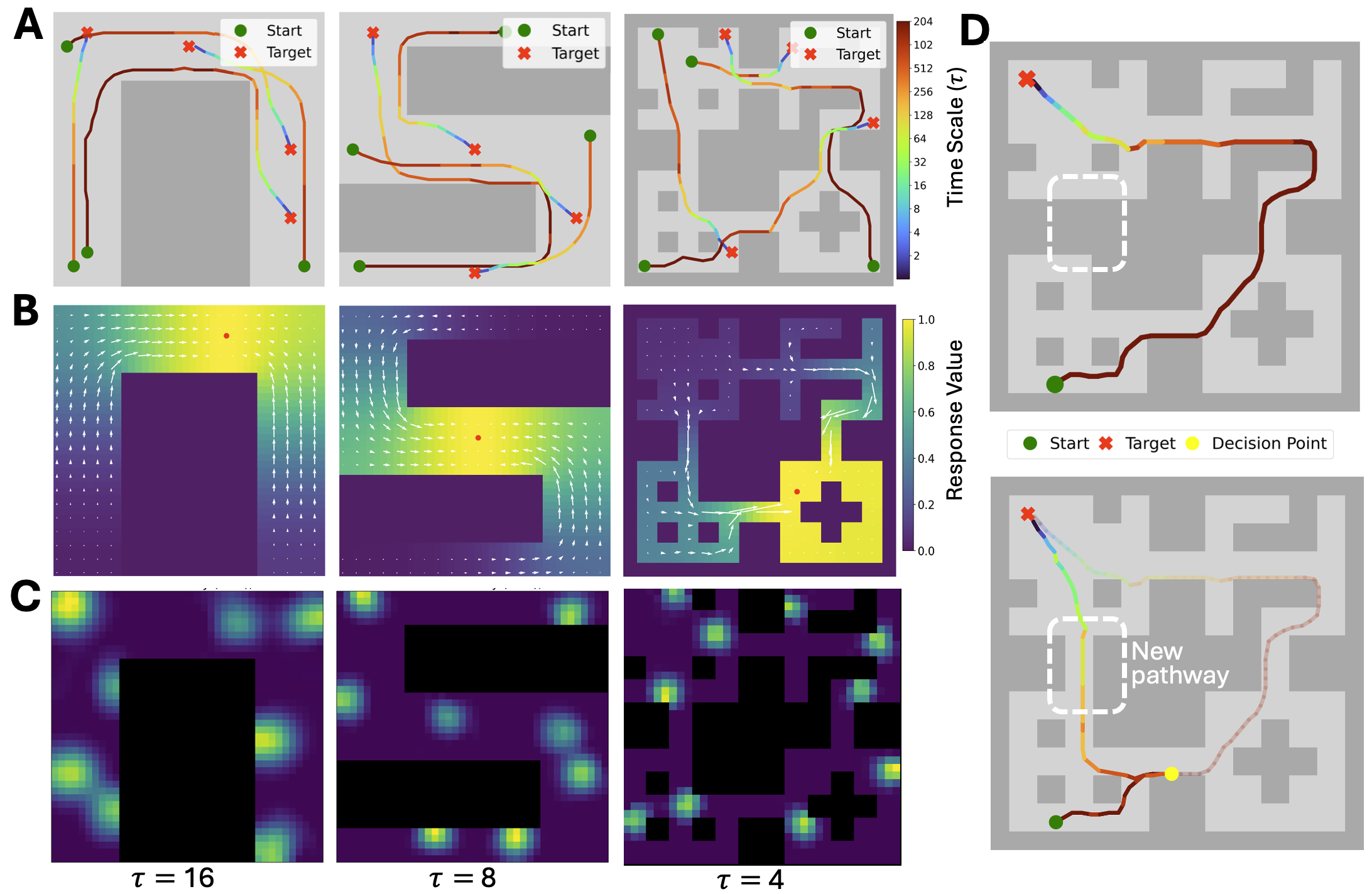}
    \caption{Place Cells in Complex Maze Environments. (A) Path planning through obstacle-containing environments. (B) Topologically-informed transition kernels $q(y|x, \tau)$ with gradient fields. Target $y$ is marked as a red point. (C) Randomly sampled place cell profiles at multiple spatial scales. (D) Remapping with environmental modification. }
    \vspace{-20pt}
    \label{fig:maze_env}
\end{figure}

\subsubsection{Navigation Efficiency}

We evaluated our model's navigation capabilities across multiple environments, conducting 50 trials per environment with randomly sampled start and goal positions from non-obstacle areas. Our model achieved $100\%$ success rate across all tested environments.

We compare against the Bug algorithm~\citep{lumelsky1987path} as a baseline since it prioritizes finding feasible paths rather than strictly optimizing for the shortest path. The Bug algorithm moves directly toward the goal until encountering an obstacle, then follows the obstacle's boundary until it can resume its direct path. We implement the oracle-enhanced version~\citep{wijmans2023emergence} that knows which boundary-following direction yields the shortest path for each obstacle, establishing an upper bound on performance.

\begin{wraptable}{r}{0.5\textwidth}
\centering
\caption{Path planning results. }
\label{table:path_plan}
 \begin{tabular}{cccc}
    \toprule
    Environment  & Success & SPL ($\uparrow$)\\
    \midrule
    Open field  &$100\%$  & $0.991\pm0.04$  \\
    U-shape     &$100\%$ & $0.919\pm0.25$   \\
    S-shape     &$100\%$ & $1.392\pm0.98$    \\
    Four-room   &$100\%$ & $1.519\pm1.17$     \\
    \bottomrule 
\end{tabular}
\end{wraptable}


To quantify efficiency, we use Success weighted by inverse Path Length (SPL)~\cite{anderson2018evaluation}, where values near $1.0$ indicate near-optimal paths and values exceeding $1.0$ suggest our method found shortcuts the Bug algorithm missed. Details of SPL calculation appear in Appendix~\ref{appen:training}.
In simpler environments (U-shape), our method performs comparably to the oracle-enhanced Bug algorithm. However, in complex environments (S-shape and Four-room), our approach significantly outperforms the baseline, discovering more efficient paths that the Bug algorithm fails to identify even with oracle assistance.


\begin{table}[h]
\caption{Path planning in Four-Room environment. }
\label{table:path_plan_four_room}
\resizebox{\linewidth}{!}{%
\centering
\begin{tabular}{cccccc}
    \toprule
    Metrics  & Bug (w/ Oracle) & Bug (w/o Oracle) & A* Search & Random Walk & Our Method \\
    \midrule
    Success (\%)  & $100\%$  & $14\%$ & $100\%$ & $2\%$ & $100\%$ \\
    SPL ($\uparrow$)  & $0.66\pm 0.51$ & $0.15\pm 0.00$ & $1.08\pm 0.50$ & $0.01\pm 0.00$ & $1.00$ \\
    \bottomrule 
\end{tabular}}
\end{table}

We also compared our method with several different baselines in the Four-Room environment, with each trial limited to 50,000 steps. The results in Table~\ref{table:path_plan_four_room} show that baseline approaches like the Bug algorithm (without oracle guidance) and Random Walk fail to solve most of the challenging scenarios.



\subsubsection{Transition and Learned Profile in Complex Environments}

To evaluate the adaptive capabilities of our approach, we examine the transition kernel $q(y|x, \tau)$ and the resulting place cell activation profiles $h(x, \tau)$ in environments containing complex obstacles. Figure~\ref{fig:maze_env} (B) presents the normalized transition probability distributions $q(y|x, \tau)$ for strategically selected locations, with a large scale to reveal long-range spatial relationships. The diffraction-like patterns, where probability flow encounters obstacles, reveal how spatial information propagates through available pathways while respecting environmental constraints. 

The gradient fields illustrate navigation potential, showing how an agent could efficiently pass obstacles. Importantly, these transition probabilities capture the true topological structure of complex environments rather than simple Euclidean distances, a critical property for realistic navigation where direct paths are often blocked. The emergent probability gradients provide a natural mechanism for guiding optimal path planning that automatically adapts to environmental geometry. 


We visualized place cell profiles at multiple spatial scales across complex maze environments in Figure~\ref{fig:maze_env} (C). Each panel displays a small subset of non-overlapping place cells at different scales $\tau$, emphasizing the emergent sparsity and localized nature of individual receptive fields. Despite increased environmental complexity, the learned place cell population maintains comprehensive spatial coverage throughout accessible regions. At smaller scales, place fields are tightly localized; at larger scales, fields broaden while maintaining their localized tiling structure. This multi-scale coverage property demonstrates the model's robust ability to develop effective spatial representations regardless of environmental geometry, a critical feature for reliable navigation in diverse settings.

\subsubsection{Remapping Properties}

We evaluated our model's ability to adapt to environmental modifications, analogous to the remapping phenomenon observed in rodent place cells. Previous neurobiological studies have shown that hippocampal place fields reorganize in geometry-dependent ways when familiar environments are altered through elongation or restructuring.

We conducted a two-phase experiment with the environment shown in the upper panel of Figure~\ref{fig:maze_env} (D). After training our model to convergence (2000 iterations) on this initial configuration, we modified the arena by removing several obstacles to introduce a novel shortcut path, as shown in the lower panel of Figure~\ref{fig:maze_env} (D). The intervention created a minimal physical alteration but significantly changed the environment's topological structure. 

We then fine-tuned our pre-trained model on this modified environment with just 50 iterations at a reduced learning rate (5e-4), allowing the position embeddings to adapt efficiently to the new spatial configuration. When testing path planning to identical target locations, the fine-tuned model successfully identified and utilized the newly available shortcuts at critical decision points where pathways diverged.

\section{Conclusion}
\vspace{-5pt}
This paper reconceptualizes hippocampal place cells as population embeddings approximating multi-step random walk transition kernels, where \(\langle h(x, \tau), h(y, \tau) \rangle = q(y|x, \tau)\), offering a biologically plausible model of spatial navigation \citep{Wilson1993,christophel2017distributed}. The time parameter \(\sqrt{\tau}\) defines a multi-scale representation, mirroring dorsoventral place field scaling \citep{Kjelstrup2008}, adaptability \citep{Muller1987}, and remapping \citep{Leutgeb2005}. Gradient ascent on \( q(y|x, \tau) \) with adaptive scale selection produces trap-free trajectories, guided by boundary avoidance, diffraction-like passage navigation, aligning with hippocampal cognitive maps \citep{McNaughton2006,Moser2008,Tolman1948}. Efficient matrix squaring (\( P_{2\tau} = P_\tau^2 \)) computes global transitions from local ones (\( P_1 \)) without past trajectory memorization, enabling preplay-like shortcut detection \citep{Norris1998,Dragoi2011,Olafsdottir2015}. Bridging connectionist models \citep{Banino2018,Whittington2020} and cognitive map theories \citep{Okeefe1978,Moser2008}, our framework captures hippocampal navigation’s dynamic properties, providing a scalable, computationally efficient model for complex environments.

\section*{Acknowledgments}
Y.~W. is partially supported by NSF DMS-2415226, DARPA W912CG25CA007 and research gift funds from Amazon and Qualcomm. 

\bibliographystyle{unsrt}
\bibliography{cleaned}



\appendix
\clearpage

\renewcommand{\contentsname}{Contents}
\setcounter{tocdepth}{1} 
\tableofcontents

Sections and subsections that contain mostly background materials are so marked in the titles. The remaining sections and subsections contain novel developments.

\section{Related Work}
\label{app:related_work}

\subsection{Place Cell Models and Representations}
The study of hippocampal place cells has a rich history since their discovery by \citep{OKeefe1971}. Computational models of place cells have evolved from simple Gaussian tuning curves \citep{Wilson1993} to more sophisticated approaches. Several models have explored population-level representations of place cells, including manifold embeddings \citep{levy2023manifold} and latent space models \citep{Gardner2022}. However, these approaches typically treat place cells as separate entities encoding specific locations rather than as collective embeddings encoding transition probabilities.

Matrix factorization approaches to neural population activity have been applied to various brain regions \citep{Cunningham2014, onken2016using}, though rarely with the specific mathematical connection to random walk processes proposed in our work. Recent work by \citep{Whittington2020} proposed a successor representation framework for place cells, which shares some conceptual similarities with our transition probability approach but differs in the specific mathematical formulation and implementation.

Hidden Markov Models (HMMs) have also been applied to model hippocampal spatial coding. The Clone-Structured Causal Graph model (CSCG) \citep{George2021} treats space as a latent sequence and uses a structured graph with cloned nodes to disambiguate aliased sensory observations in different contexts. Related work \citep{raju2024space} further develops sequence-based models of hippocampal function.

\subsection{Multi-scale Spatial Representations}
The variation in place field sizes along the dorsoventral axis of the hippocampus has been extensively documented \citep{Kjelstrup2008, Strange2014, Stensola2012}, but computational models that explicitly address this multi-scale organization remain limited. Models incorporating scale in place cell representations include hierarchical approaches \citep{Stachenfeld2017}, wavelet-like representations \citep{olshausen1996wavelet}, and scale-space theories \citep{Lindeberg2013} borrowed from computer vision.

Our approach differs by deriving the multi-scale representation directly from the time parameter of a random walk process, providing a principled connection between scale and exploration time that has not been previously exploited in place cell models.

\subsection{Navigation and Path Planning}
Biologically-inspired navigation algorithms have drawn on various hippocampal properties. Vector-based navigation models \citep{Erdem2012, Bush2015} and successor representation approaches \citep{Stachenfeld2017, Momennejad2017, zhao2024minimalistic} have demonstrated effective navigation capabilities. Diffusion-based path planning algorithms \citep{chen2016motion,huang2023diffusion} share mathematical similarities with our heat equation formulation but lack the direct connection to neural representations.

Recent work by \citep{liang2024distance} and \citep{wakhloo2024neural} has emphasized the role of population coding in navigation but without the specific inner product relationship and transition probability framework we propose.

\subsection{Successor Representation}
Our work is closely related to the Successor Representation (SR) \citep{dayan1993improving,gershman2018successor}, as both frameworks build upon powers of a transition matrix to construct a representation of space. The SR is typically defined as a discounted sum, $M = \sum_{k=0}^\infty (\gamma T)^k$, which yields the expected discounted future occupancy of all states from any given starting state, and the discount factor $\gamma$ implicitly sets a single, predictive temporal horizon. In contrast, our approach utilizes the discrete powers of the transition matrix directly, $P_\tau = P^\tau$, to explicitly model the transition probabilities at multiple, distinct time horizons, where the parameter $\tau$ corresponds directly to the time scale of a random walk. This multi-scale representation aligns more closely with the observed functional gradient of place field sizes along the hippocampal dorsoventral axis. 

The primary novelty of our framework lies in learning a biologically-constrained, non-negative matrix factorization of these transition kernels, where vector embeddings $h(x,\tau)$ are learned such that their inner product reconstructs the transition probability, $\langle h(x,\tau), h(y,\tau) \rangle = q(y|x,\tau)$. A crucial theoretical insight is that the combination of this inner product objective with non-negativity and orthogonality constraints for distant locations gives rise to {\bf emergent sparsity}; the embeddings are forced to have disjoint support sets, providing a geometric explanation for the localized firing fields of place cells without explicit regularization. This multi-scale architecture confers significant functional advantages, enabling an adaptive gradient-ascent-based navigation policy that selects the optimal scale $\tau^*$ at each step to generate smooth, trap-free trajectories. This approach is not only more scalable,  but also offers a more direct navigation mechanism that bypasses the need for explicit value function computation common in SR-based reinforcement learning agents.

\subsection{Inner Product Spaces in Neural Representation}
Inner product spaces as a basis for neural computation have been explored by several researchers \citep{Ganguli2012, Eliasmith2012}. More recently, \citep{Chaudhuri2019} proposed that neural populations represent probability distributions through their activation patterns, with some conceptual overlap with our approach.

Navigational planning in inner product spaces has connections to kernel methods in machine learning \citep{Scholkopf2002} and information geometry \citep{Amari2016}, though these connections have been underexplored in neuroscience. Our work bridges these fields by explicitly relating inner products between place cell populations to transition probabilities derived from random walk processes.

\subsection{Heat Equation and Diffusion Models in Neuroscience}
The connection between neural dynamics and diffusion processes has been explored in various contexts \citep{dutordoir2023neural}. The specific relationship between heat kernels and geodesic distances, which forms a foundation for our approach, has strong connections to manifold learning \citep{Coifman2006} and dimensionality reduction techniques \citep{Belkin2003}.

Our work builds upon these ideas by applying them specifically to place cell populations and spatial navigation, providing a novel bridge between diffusion processes, neural representations, and navigational behavior.

\section{Limitations} 
\label{app:limitations}

\subsection{Lack of Sensory Integration}

Our current model does not incorporate sensory inputs for spatial navigation\citep{McNaughton2006,Okeefe1978}. Real hippocampal place cells integrate multimodal sensory information including visual landmarks, self-motion cues from path integration, and boundary detection. This sensory integration is fundamental to how place cells form their spatial receptive fields and anchor representations to environmental features.

Without sensory processing, our model cannot explain how place fields emerge during initial exposure to novel environments. In our framework, position embeddings $h(x,\tau)$ are learned from pre-defined transition probabilities that already encode environmental structure, rather than being built up through sensory experience. We also cannot model perceptual anchoring, where place cells maintain stable firing relative to visual landmarks, or how sensory conflicts (such as navigation in darkness) affect place cell stability. Future extensions incorporating vision-based inputs, self-motion integration, and boundary vector cell signals would significantly enhance biological plausibility.

\subsection{Limited to 2D Static Environments}
Our experiments and analyses are restricted to two-dimensional static environments, whereas biological spatial navigation operates in three-dimensional space and continuously changing contexts\citep{Kjelstrup2008,Yartsev2011}. Studies in flying bats have revealed volumetric place cells encoding 3D positions, and terrestrial animals navigating multi-level structures require three-dimensional representations. Extension to 3D introduces complications our framework does not address: vertical movements and gravity-dependent asymmetries, different place field scaling across dimensions, and substantially increased computational complexity. Dynamic environments would require time-varying transition probabilities and continuous updating mechanisms our framework does not provide.

\subsection{Online Sequential Learning}

Another limitation is our lack of online learning over sequential experiences to model temporal dynamics and memory consolidation\citep{Dragoi2011,Olafsdottir2015}. Our batch optimization learns $h(x,\tau)$ across all spatial locations simultaneously, fundamentally differing from how biological place cells develop through experience. Biological place cell formation is incremental: initial weak or unstable spatial tuning gradually sharpens through repeated exploration via activity-dependent synaptic plasticity. Our framework does not capture this gradual emergence from unstructured initial conditions. Our matrix squaring $P_{2\tau} = P_\tau^2$ achieves global integration instantaneously based on environmental structure, not through iterative learning over sleep-wake cycles analogous to biological replay.

Despite these limitations, our framework offers valuable theoretical insights into computational principles underlying hippocampal spatial navigation, particularly the role of multi-scale transition probabilities, emergence of sparsity from geometric constraints, and the connection between random walk dynamics and cognitive maps.

\section{Spectral Analysis of Multi-Step Random Walk}
\label{app:spectral_analysis}

This appendix provides a detailed spectral analysis of the multi-step random walk transition kernel, establishing connections between our random walk model and position embeddings.

\subsection{Eigendecomposition of the Transition Matrix (Background)}

Since the one-step transition matrix $P_1$ is symmetric by construction, it admits an eigendecomposition:
\begin{align}
P_1 = Q \Lambda Q^T,
\end{align}
with orthogonal $Q$ ($Q^T Q = I$) and diagonal $\Lambda = \text{diag}(\lambda_1, \ldots, \lambda_n)$, where $0 \leq \lambda_i \leq 1$, where we assume the random walk is irreducible and aperiodic. The multi-step transition matrix is:
\begin{align}
P_1^\tau = Q \Lambda^\tau Q^T
\end{align}

The eigenvalues are typically ordered as $1 = \lambda_1 > \lambda_2 \geq \lambda_3 \geq ...$, with the first eigenvalue corresponding to the stationary distribution and subsequent eigenvalues capturing spatial patterns at increasing levels of detail. 

\subsection{Position Embeddings from Spectral Decomposition}

The spectral decomposition provides a natural position embedding. If we define:
\begin{align}
H_i(x, \tau) = \lambda_i^{\tau/2} Q_i(x)
\end{align}
in the discrete case where $Q_i(x)$ is the $i$-th column of $Q$, or:
\begin{align}
H_i(x, \tau) = e^{\lambda_i \tau/2} \phi_i(x)
\end{align}
in the continuous case (where $\lambda_i$ and $\phi_i$ are respectively eigenvalues and eigenfunctions of the Laplacian), then the transition probability can be expressed as:
\begin{align}
p(y|x, \tau) = \sum_{i} H_i(x, \tau) H_i(y, \tau) = \langle H(x, \tau), H(y, \tau) \rangle
\end{align}

This provides a closed-form expression for position embeddings that exactly reproduce the transition probabilities through inner products.

\subsection{Normalization and Non-Negative Embeddings}

To obtain normalized embeddings, we define:
\begin{align}
h_{\text{spec}}(x, \tau) = \frac{H(x, \tau)}{\|H(x, \tau)\|} = \frac{H(x, \tau)}{\sqrt{p(x|x, \tau)}}
\end{align}

This normalized embedding satisfies:
\begin{align}
\langle h_{\text{spec}}(x, \tau), h_{\text{spec}}(y, \tau) \rangle = \frac{p(y|x, \tau)}{\sqrt{p(x|x, \tau) \cdot p(y|y, \tau)}} = q(y|x, \tau)
\end{align}

However, $h_{\text{spec}}(x, \tau)$ may contain negative components, which conflicts with the biological constraint that neural firing rates must be non-negative. This is where Horn's theorem becomes relevant.

\subsection{Non-Negative Matrix Factorization (Background)}

Horn's theorem \citep{Horn1954} provides the theoretical foundation for obtaining non-negative embeddings from our transition matrices: If $A$ is a symmetric matrix with non-negative entries, then there exists a non-negative matrix $B$ such that $A = BB^T$.

We provide a sketch of the proof:
\begin{proof}[Proof sketch of Horn's theorem]
Given a symmetric non-negative matrix $A$, consider its spectral decomposition $A = U\Sigma U^T$. Let $A = \sum_i \sigma_i u_i u_i^T$, where $\sigma_i$ are eigenvalues and $u_i$ are eigenvectors.

For each outer product $u_i u_i^T$ (which may have negative entries), we can express it as a linear combination of non-negative rank-1 matrices: $u_i u_i^T = \sum_j c_j v_j v_j^T$, where $v_j$ are non-negative vectors and $c_j$ are coefficients.

By appropriate selection of $v_j$ (e.g., using vertices of the hypercube defined by the signs of $u_i$), we can ensure that $c_j \geq 0$ when $\sigma_i > 0$. This allows us to express $A$ as a sum of non-negative rank-1 matrices, which can be arranged as $A = BB^T$ where $B$ has non-negative entries.
\end{proof}

For our normalized transition matrix $Q_\tau = [q(y|x, \tau)]$, which is symmetric with non-negative entries, Horn's theorem guarantees the existence of a non-negative matrix $H_\tau$ such that $Q_\tau = H_\tau H_\tau^T$. The rows of $H_\tau$ provide our desired non-negative position embeddings $h(x, \tau)$.

\section{Heat Diffusion with Reflecting Boundaries (Background)}
\label{app:heat_equation}

This appendix provides a detailed mathematical derivation of the connection between our discrete random walk model and the continuous heat equation, establishing the relationship between transition probabilities and geodesic distances.

\subsection{From Discrete Random Walk to Reflecting Heat Equation}

We begin with a discrete random walk on a two-dimensional integer grid. For simplicity, we assume the one-step transition $p(y|x, \tau=1)$ of the random walk is to move to one of 4 nearest neighbors with $p_{\rm move} = 1/4$. We assume this simplest $p(y|x, \tau=1)$ in our theoretical derivations in all the relevant sections in the Appendix. Similar results can be obtained for more general $p(y|x, \tau=1)$, with a different diffusion coefficient $\alpha$.

Let $(i,j) \in \mathbb{Z}^2$ denote discrete spatial coordinates, and $k \in \mathbb{Z}_{\geq 0}$ denote discrete time steps. To connect with continuous diffusion, we introduce spatial and temporal units:
\begin{align}
x = i \cdot dx, \;\;
y = j \cdot dx, \;\;
\tau = k \cdot dt
\end{align}
where $dx$ is the spacing between adjacent grid points and $dt$ is the time step. Following standard diffusion scaling, we set $dx = \sqrt{dt}$, which ensures convergence to a well-defined limit as $dx \to 0$ \citep{Oksendal2003}.

The discrete random walk has the following transition probabilities:
\begin{align}
p((i',j')|(i,j), k=1) &= p_{\text{move}} = \frac{1}{4} \quad \text{for each unobstructed neighbor } (i',j') \text{ of } (i,j)\\
p((i,j)|(i,j), k=1) &= 1 - N(i,j) \cdot p_{\text{move}} = 1 - \frac{N(i,j)}{4}
\end{align}
where $N(i,j)$ is the number of unobstructed neighbors of location $(i,j)$ (maximum 4 in a 2D grid with 4-connectivity).

\subsection{Derivation of the Heat Equation}

To derive the continuous limit, for interior points (those not adjacent to obstacles), the discrete update rule is:
\begin{align}
p(i,j,k+1) &= p(i,j,k)(1-4p_{\text{move}}) + p_{\text{move}}[p(i+1,j,k) + p(i-1,j,k) + p(i,j+1,k) + p(i,j-1,k)] \nonumber\\
&= p(i,j,k) + p_{\text{move}}[p(i+1,j,k) + p(i-1,j,k) + p(i,j+1,k) + p(i,j-1,k) - 4p(i,j,k)] \label{eq:lap}
\end{align}

Dividing both sides by $dt$ and using $p_{\text{move}} = 1/4$:
\begin{align}
\frac{p(i,j,k+1) - p(i,j,k)}{dt} = \frac{1}{4} \cdot \frac{1}{dt}[p(i+1,j,k) + p(i-1,j,k) + p(i,j+1,k) + p(i,j-1,k) - 4p(i,j,k)]
\end{align}

Using the standard finite difference approximation for the Laplacian \citep{Leveque2007}:
\begin{align}
\nabla^2 p(i,j,k) \approx \frac{1}{dx^2}[p(i+1,j,k) + p(i-1,j,k) + p(i,j+1,k) + p(i,j-1,k) - 4p(i,j,k)]
\end{align}

Substituting $dx^2 = dt$ and taking the limit as $dt \to 0$:
\begin{align}
\lim_{dt \to 0} \frac{p(i,j,k+1) - p(i,j,k)}{dt} = \frac{1}{4} \nabla^2 p(x,y,t)
\end{align}

This yields the heat equation with diffusion coefficient $\alpha = 1/4$:
\begin{align}
\frac{\partial p(x,y,\tau)}{\partial \tau} = \alpha \nabla^2 p(x,y,\tau)
\end{align}

\subsection{Reflecting Boundary Conditions}
\label{sec:reflecting_boundary}

Our discrete random walk enforces reflecting boundary conditions, preventing probability flow into obstacles, mirroring hippocampal obstacle avoidance \citep{McNaughton2006}. Consider a boundary point \((i,j)\) with an obstacle at \((i+1,j)\), so the number of valid neighbors is \( N(i,j) = 3 \) (points \((i-1,j)\), \((i,j+1)\), \((i,j-1)\)). The self-transition probability is \( p_{\text{stay}} = 1 - \frac{N(i,j)}{4} = \frac{1}{4} \), redistributing probability that would flow to the obstacle back to \((i,j)\).

The probability update for \((i,j)\) at time step \( k+1 \) is:

\begin{equation}
p(i,j,k+1) = \frac{1}{4} p(i,j,k) + \frac{1}{4} \left[ p(i-1,j,k) + p(i,j+1,k) + p(i,j-1,k) \right]
\label{eq:boundary_update}
\end{equation}

This can be rewritten as:

\begin{equation}
p(i,j,k+1) = p(i,j,k) + \frac{1}{4} \left[ p(i-1,j,k) + p(i,j+1,k) + p(i,j-1,k) - 3 p(i,j,k) \right]
\label{eq:boundary_laplacian}
\end{equation}

In finite difference methods, the reflecting condition \(\left.\frac{\partial p}{\partial n}\right|_{(i+1,j)} = 0\) is enforced using a ghost point at \((i+1,j)\), setting \( p(i+1,j,k) = p(i,j,k) \) to ensure zero normal flux \citep{Leveque2007}. Substituting into the interior Laplacian update (equation~\ref{eq:lap}):

\begin{equation}
p(i,j,k+1) = p(i,j,k) + \frac{1}{4} \left[ p(i-1,j,k) + p(i+1,j,k) + p(i,j+1,k) + p(i,j-1,k) - 4 p(i,j,k) \right]
\label{eq:lap}
\end{equation}

yields equation (\ref{eq:boundary_laplacian}), as \( p(i+1,j,k) = p(i,j,k) \) reduces the neighbor terms to three. This substitution ensures the boundary update aligns with the random walk’s mechanism, where \( p_{\text{stay}} = \frac{1}{4} \) assigns zero probability to obstacle transitions, maintaining probability conservation and enabling smooth navigation around obstacles, as observed in hippocampal place cell activity \citep{McNaughton2006}.

\subsection{Connection to Geodesic Distance}

Varadhan's formula \citep{Varadhan1967} establishes a deep relationship between the heat kernel and geodesic distance. For a complete Riemannian manifold $M$ with heat kernel $p(x,y,\tau)$, Varadhan proved that:
\begin{align}
\lim_{\tau \to 0} -4t \log p(x,y,\tau) = d_g^2(x,y)
\end{align}
where $d_g(x,y)$ is the geodesic distance between points $x$ and $y$. 

 While Varadhan’s original large deviation principle \citep{Varadhan1967} applies to smooth manifolds without boundaries, extensions to domains with reflecting boundaries \citep{Norris1997} ensure that the short-time behavior of the heat kernel \( p(y, \tau|x, 0) \) reflects the geodesic distance \( d_g(x, y) \), the shortest path within \( \Omega \) avoiding obstacles. 
For our normalized transition probability: a similar asymptotic relationship holds:
\begin{equation}
\lim_{\tau \to 0} -\tau \log q(y|x, \tau) = \frac{d_g^2(x, y)}{4\alpha},
\end{equation}
where \( \alpha = 1/4 \) is the diffusion coefficient. This follows because \( p(x|x, \tau) \) and \( p(y|y, \tau) \), influenced by boundary reflections, have asymptotic forms that cancel in the logarithm as \( \tau \to 0 \), leaving the geodesic term dominant.

As $\tau$ increases, our distance metric transitions from approximating geodesic distance to incorporating more global aspects of the domain's connectivity, creating a multi-scale representation that seamlessly integrates local metric information with global connectivity structure. Section \ref{app:navigation} explains eigen analysis of connectivity.

\section{Open-Field Environment}
\label{app:open_field}

In unbounded, obstacle-free environments (open fields), the symmetric random walk simplifies to an isotropic diffusion process, offering a theoretical justification for the Gaussian tuning of place cells observed in such settings \citep{OKeefe1971, Wilson1993}. As the grid discretization refines ($dx \to 0$, $dt \to 0$, $dx = \sqrt{dt}$), the transition probability $p(y|x, \tau)$ converges to the heat equation's fundamental solution in 2D free space:
\begin{equation}
p(y|x, \tau) = \frac{1}{4\pi \alpha \tau} \exp\left( -\frac{\|y - x\|^2}{4\alpha \tau} \right)
\end{equation}

where $\alpha = 1/4$ is the diffusion coefficient, and $\|y - x\|^2$ is the squared Euclidean distance. Since $p(x|x, \tau) = \frac{1}{4\pi \alpha \tau}$ is position-independent, the normalized transition probability becomes:
\begin{equation}
q(y|x, \tau) = \frac{p(y|x, \tau)}{\sqrt{p(x|x, \tau) p(y|y, \tau)}} = \exp\left( -\frac{\|y - x\|^2}{4\alpha \tau} \right) = \exp\left( -\frac{\|y - x\|^2}{\tau} \right)
\end{equation}

This Gaussian kernel, with variance $\sigma^2 = 2\alpha \tau = \tau/2$, reflects the diffusive spread of the random walk and mirrors the approximately Gaussian firing fields of hippocampal place cells in open environments. Given $q(y|x, \tau) = \langle h(x, \tau), h(y, \tau) \rangle$, the position embeddings $h(x, \tau)$ must reproduce this Gaussian decay. We construct $h(x, \tau)$ with Gaussian components, providing a mathematical basis for why place cell population activity exhibits Gaussian profiles in open fields. 

\begin{theorem}[Gaussian Embeddings in Open Fields]
\label{thm:gaussian_embeddings_app}
In an unbounded 2D open field, where $q(y|x, \tau) = \exp\left( -\frac{\|y - x\|^2}{\tau} \right)$, there exists a position embedding $h(x, \tau) \in \mathbb{R}^n$ with non-negative components such that $\langle h(x, \tau), h(y, \tau) \rangle = q(y|x, \tau)$, and each component $h_i(x, \tau)$ is a Gaussian function of $x$ with variance $\tau/2$ per dimension.
\end{theorem}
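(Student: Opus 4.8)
The plan is to exhibit the embedding explicitly as a (discretized) dictionary of translated Gaussians, that is, the continuum analogue of the non-negative factorization that Horn's theorem supplies abstractly. The engine is the Gaussian convolution/semigroup identity: for a width parameter $a>0$ in dimension $d$,
\begin{equation}
\int_{\mathbb{R}^d} e^{-a\|x-z\|^2}\, e^{-a\|y-z\|^2}\, dz \;=\; \left(\frac{\pi}{2a}\right)^{d/2} e^{-\frac{a}{2}\|x-y\|^2}.
\end{equation}
First I would verify this by completing the square in $z$ and performing the resulting isotropic Gaussian integral, then choose $a$ and an overall constant $c>0$ so that $c^2$ times the right-hand side equals $q(y|x,t)=e^{-\|y-x\|^2/t}$; this single matching step fixes the width, hence the variance, of the component Gaussians. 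The upshot is that the $L^2(\mathbb{R}^d)$-valued feature map $x \mapsto \big(z \mapsto c\,e^{-a\|x-z\|^2}\big)$ is an exact, manifestly non-negative factorization with $\langle h(x,t),h(y,t)\rangle_{L^2} = q(y|x,t)$, whose ``coordinates'' (values at the centers $z$) are Gaussian bumps in $x$ of the prescribed width.

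To land in $\mathbb{R}^n$ I would discretize the integral over the center variable $z$: choose $n$ centers $\mu_1,\dots,\mu_n$ on a fine grid over the region of interest with quadrature weights $w_i$, and set $h_i(x,t) = \sqrt{w_i}\,c\,e^{-a\|x-\mu_i\|^2}$. Then non-negativity is immediate, each $h_i(\cdot,t)$ is a Gaussian of the prescribed variance, and $\langle h(x,t),h(y,t)\rangle = \sum_i w_i\, c^2\, e^{-a\|x-\mu_i\|^2-a\|y-\mu_i\|^2}$ is a Riemann sum for the integral above, converging to $q(y|x,t)$ as the grid is refined; on the fixed finite lattice used in the experiments one may instead read the statement as exact factorization on that finite point set. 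The normalization $\|h(x,t)\|=1$ is automatic in the continuum because $q(x|x,t)=1$, and can otherwise be enforced by dividing out $\|h(x,t)\|$, which is constant in $x$ in the open field and therefore leaves each component Gaussian intact.

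I expect the main obstacle to be reconciling ``exact equality'' with a truly finite $n$: a strictly positive-definite Gaussian kernel on an unbounded domain has infinite rank, so an exact finite Euclidean factorization over a continuum of positions is impossible — exactness is available only in the $L^2$/continuum version or on a finite lattice, with the finite-$n$ construction approximating $q$ to arbitrary accuracy as $n\to\infty$ and the grid is refined. The only other point requiring care is the bookkeeping of the Gaussian normalizations in the convolution identity, which is what determines $c$, the quadrature weights, and the variance of each component (a convolutional square root of $q$); once these constants are pinned down the remaining verification is routine.
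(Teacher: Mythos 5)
Your construction is essentially the same as the paper's: a dictionary of translated Gaussian bumps $h_i(x,t)=c_i e^{-a\|x-\mu_i\|^2}$ centered on a dense grid of anchors, with the inner product evaluated by completing the square in the anchor variable and the resulting sum over anchors treated as a quadrature approximation to a constant that is then normalized away. Two points of comparison are worth recording. First, your convolution identity is the \emph{correct} form of the key algebraic step. The paper instead writes $\|x-\mu_i\|^2+\|y-\mu_i\|^2=\|x-y\|^2+2\|\tfrac{x+y}{2}-\mu_i\|^2$, which drops a factor of $\tfrac12$ on the $\|x-y\|^2$ term (the correct identity has $\tfrac12\|x-y\|^2$ there). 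Carried through correctly, matching $c^2(\pi/2a)^{d/2}e^{-\frac{a}{2}\|x-y\|^2}$ to $e^{-\|x-y\|^2/t}$ forces $a=2/t$, i.e., component variance $t/4$ per dimension --- the convolutional square root of the kernel's variance $t/2$ --- rather than the $t/2$ asserted in the theorem and used in the paper's construction $h_i(x,t)=c_i\exp(-\|x-\mu_i\|^2/t)$, which as written reproduces $q(y|x,2t)$ rather than $q(y|x,t)$. So your bookkeeping exposes a constant-factor slip in both the paper's proof and the theorem's variance claim; you should state the resulting variance explicitly rather than leaving it implicit. Second, your caveat that a strictly positive-definite Gaussian kernel over a continuum of positions admits no exact finite-rank non-negative factorization is correct and sharper than the paper, which passes to $n\to\infty$ and then ``adjusts $C=1$''; both arguments really establish the identity only in the limit or on a finite lattice, consistent with the $\approx$ in the paper's final display.
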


\begin{proof}
The transition kernel $q(y|x, \tau) = \exp\left( -\frac{\|y - x\|^2}{\tau} \right)$ is a positive definite Gaussian kernel with variance $\tau/2$ (since $4\alpha \tau = \tau$ for $\alpha = 1/4$), consistent with the Gaussian tuning of place cells in open fields. We construct $h(x, \tau) \in \mathbb{R}^n$ directly as:
\begin{equation}
h_i(x, \tau) = c_i \exp\left( -\frac{\|x - \mu_i\|^2}{\tau} \right)
\end{equation}

where $\mu_i \in \mathbb{R}^2$ are fixed anchor points (e.g., a uniform grid), $c_i > 0$ are constants, and the variance per dimension is $\tau/2$. The inner product is:
\begin{align}
\langle h(x, \tau), h(y, \tau) \rangle &= \sum_{i=1}^n h_i(x, \tau) h_i(y, \tau) = \sum_{i=1}^n c_i^2 \exp\left( -\frac{\|x - \mu_i\|^2 + \|y - \mu_i\|^2}{\tau} \right)
\end{align}

Rewrite the exponent:
\begin{equation}
\|x - \mu_i\|^2 + \|y - \mu_i\|^2 = \|x - y\|^2 + 2\left\| \frac{x + y}{2} - \mu_i \right\|^2
\end{equation}

so:
\begin{align}
\langle h(x, \tau), h(y, \tau) \rangle = \exp\left( -\frac{\|x - y\|^2}{\tau} \right) \sum_{i=1}^n c_i^2 \exp\left( -\frac{2\left\| \frac{x + y}{2} - \mu_i \right\|^2}{\tau} \right)
\end{align}

For a dense set of $\mu_i$ (e.g., a grid with spacing $\ll \sqrt{\tau}$), the sum approximates a constant over a local region around $(x + y)/2$, as the terms $\exp\left( -\frac{2\left\| \frac{x + y}{2} - \mu_i \right\|^2}{\tau} \right)$ form a kernel density estimate. Set $c_i^2 = \frac{1}{n}$; as $n \to \infty$, the sum converges to a constant $C$ (proportional to the density of $\mu_i$), yielding:
\begin{equation}
\langle h(x, \tau), h(y, \tau) \rangle \approx C \exp\left( -\frac{\|x - y\|^2}{\tau} \right)
\end{equation}

Adjust $C = 1$ by scaling $c_i$ (e.g., $c_i = \sqrt{\frac{1}{C n}}$), ensuring $\langle h(x, \tau), h(y, \tau) \rangle = q(y|x, \tau)$. Each $h_i(x, \tau)$ is Gaussian with variance $\tau/2$ per dimension, and $h_i(x, \tau) \geq 0$, satisfying the theorem.
\end{proof}

This open-field case not only validates our model against biological data but also serves as a baseline to explore how environmental structure (e.g., obstacles) perturbs these Gaussian properties, as observed in constrained settings \citep{Derdikman2009, Rich2014}.


\section{Emergent Sparsity from Non-negativity and Orthogonality}
\label{app:sparsity}

Let $h(x,\tau)\in\mathbb{R}_+^n$ denote the non-negative representation of position $x\in\mathcal{X}$ at scale $\sqrt{\tau}$, and let
\begin{equation}
\label{eq:inner-product-q}
\langle h(x,\tau),\, h(y,\tau)\rangle \;=\; q(y\mid x,\tau)
\end{equation}
where $q(\cdot\mid\cdot,\tau)\ge 0$ is a symmetric multi-step transition kernel (adjacency at scale $\sqrt{\tau}$).
Define the (undirected) transition graph $\mathcal{G}_\tau=(\mathcal{X},E_\tau)$ by
\begin{equation}
(x,y)\in E_\tau \quad \Longleftrightarrow \quad q(y\mid x,\tau)>0
\end{equation}
Note that this transition graph is for $\tau$ steps instead of one step. 
Write $\mathrm{supp}(v)=\{i\in[n]: v_i>0\}$ for the support of a non-negative vector $v\in\mathbb{R}^n_+$, and put
\begin{equation}
S_i(\tau)\;=\;\{x\in\mathcal{X}:\; h_i(x,\tau)>0\}
\quad\text{(the active set of coordinate $i$ at scale $\tau$)}
\end{equation}
Let $\deg_\tau(x)$ denote the degree of $x$ in $\mathcal{G}_\tau$, $\Delta(\tau)=\max_x \deg_\tau(x)$ the maximum degree, and
\begin{equation}
\rho(\tau)\;=\;\frac{1}{|\mathcal{X}|}\sum_{x\in\mathcal{X}} \deg_\tau(x)
\end{equation}
the mean neighborhood size.

\begin{lemma}[Non-negativity $\,+$ orthogonality $\Rightarrow$ disjoint support]
\label{lem:disjoint-support}
If $h(x,\tau),h(y,\tau)\in\mathbb{R}_+^n$ and $\langle h(x,\tau),h(y,\tau)\rangle=0$, then
\begin{equation}
\mathrm{supp}\big(h(x,\tau)\big)\;\cap\;\mathrm{supp}\big(h(y,\tau)\big)\;=\;\varnothing
\end{equation}
\end{lemma}

\begin{proof}
By non-negativity,
\begin{equation}
0=\langle h(x,\tau),h(y,\tau)\rangle=\sum_{i=1}^n h_i(x,\tau)\,h_i(y,\tau)
\end{equation}
is a sum of non-negative terms. Hence each term vanishes:
$h_i(x,\tau)\,h_i(y,\tau)=0$ for all $i\in[n]$, and thus no coordinate is simultaneously positive in both vectors.
\end{proof}

\begin{proposition}[Clique structure and sparsity bound]
\label{prop:clique-and-bound}
Assume \eqref{eq:inner-product-q} and that
\begin{equation}
q(y\mid x,\tau)=0\quad\text{whenever}\quad (x,y)\notin E_\tau
\end{equation}
Then for each coordinate $i\in[n]$:
\begin{enumerate}
\item[\emph{(i)}] $S_i(\tau)$ induces a \emph{clique} in $\mathcal{G}_\tau$; i.e., every pair $x,y\in S_i(\tau)$ is adjacent in $\mathcal{G}_\tau$
\item[\emph{(ii)}] Consequently,
\begin{equation}
|S_i(\tau)|\;\le\; 1+\min_{x\in S_i(\tau)} \deg_\tau(x)\;\le\;1+\Delta(\tau)
\end{equation}
\item[\emph{(iii)}] The average number of active coordinates per position obeys
\begin{equation}
\label{eq:avg-sparsity-bound}
\frac{1}{|\mathcal{X}|}\sum_{x\in\mathcal{X}}\|h(x,\tau)\|_0
\;=\;\frac{1}{|\mathcal{X}|}\sum_{i=1}^n |S_i(\tau)|
\;\le\; \frac{n\,(1+\Delta(\tau))}{|\mathcal{X}|}
\end{equation}
Moreover, since $\Delta(\tau) = \max_x \deg_\tau(x)$ and $\rho(\tau)$ is the mean degree, any a priori bound of the form $\Delta(\tau)\le C\,\rho(\tau)$ (with graph-dependent constant $C\ge 1$) yields
\begin{equation}
\frac{1}{|\mathcal{X}|}\sum_{x}\|h(x,\tau)\|_0
\;\le\; \frac{n\,[1+C\,\rho(\tau)]}{|\mathcal{X}|}
\end{equation}
\end{enumerate}
\end{proposition}

\begin{proof}
\emph{(i)} Take any $x,y\in S_i(\tau)$ with $x\neq y$. Then $h_i(x,\tau)>0$ and $h_i(y,\tau)>0$, so by Lemma~\ref{lem:disjoint-support} they cannot be orthogonal. By the hypothesis $q(y\mid x,\tau)=\langle h(x,\tau),h(y,\tau)\rangle$, orthogonality occurs exactly when $(x,y)\notin E_\tau$. Therefore $(x,y)\in E_\tau$. Since the choice of $x,y$ was arbitrary, $S_i(\tau)$ induces a clique.

\smallskip
\noindent
\emph{(ii)} Fix any $x^\ast\in S_i(\tau)$. By \emph{(i)}, every $y\in S_i(\tau)\setminus\{x^\ast\}$ must be adjacent to $x^\ast$, so
$S_i(\tau)\subseteq \{x^\ast\}\cup \mathcal{N}_\tau(x^\ast)$, where $\mathcal{N}_\tau(x^\ast)$ is the neighborhood of $x^\ast$.
Thus $|S_i(\tau)|\le 1+\deg_\tau(x^\ast)$. Minimizing over $x^\ast\in S_i(\tau)$ gives the first inequality; the second follows from $\deg_\tau(x^\ast)\le \Delta(\tau)$.

\smallskip
\noindent
\emph{(iii)} The identity
\begin{equation}
\sum_{x\in\mathcal{X}}\|h(x,\tau)\|_0 \;=\; \sum_{i=1}^n |S_i(\tau)|
\end{equation}
is a double-counting equality (both sides count the number of pairs $(x,i)$ with $h_i(x,\tau)>0$). Using \emph{(ii)} and summing over $i$ yields
$\sum_i |S_i(\tau)| \le n\,[1+\Delta(\tau)]$, and dividing by $|\mathcal{X}|$ establishes \eqref{eq:avg-sparsity-bound}.
If an a priori comparison $\Delta(\tau)\le C\,\rho(\tau)$ holds for the graph family under consideration, the final bound follows immediately.
\end{proof}

\paragraph{Geometric specialization.}
If $\mathcal{X}\subset\mathbb{R}^m$ is $\eta$-separated (minimum inter-point distance $\ge \eta>0$) and $\mathcal{G}_\tau$ is a geometric threshold graph
\begin{equation}
(x,y)\in E_\tau \quad \Longleftrightarrow \quad \|x-y\|\le \sqrt{\tau}
\end{equation}
then each closed neighborhood fits inside a ball $B(x,\sqrt{\tau})$ of radius $\sqrt{\tau}$.
A standard packing argument implies
\begin{equation}
\deg_\tau(x)\;+\;1 \;\le\; C_m\,\Big(\tfrac{\sqrt{\tau}}{\eta}\Big)^{m}
\end{equation}
for a constant $C_m$ depending only on the ambient dimension $m$.
Consequently,
\begin{equation}
|S_i(\tau)| \;\le\; C_m\,\Big(\tfrac{\sqrt{\tau}}{\eta}\Big)^{m}
\qquad
\frac{1}{|\mathcal{X}|}\sum_{x}\|h(x,\tau)\|_0,
\;\le\;
\frac{d\,C_m}{|\mathcal{X}|}\,\Big(\tfrac{\sqrt{\tau}}{\eta}\Big)^{m}
\end{equation}
Thus, as $\tau$ increases, the allowable clique size (hence the bound on average support) grows polynomially with the geometric volume of the $\sqrt{\tau}$-ball, quantitatively linking scale to reduced sparsity (broader fields).

\begin{corollary}[Scale-dependent localization]
\label{cor:scale-localization}
Assume $\tau\mapsto \mathcal{G}_\tau$ is monotone in the sense that $\tau_1<\tau_2$ implies $E_{\tau_1}\subseteq E_{\tau_2}$. Then $\Delta(\tau)$ and $\rho(\tau)$ are non-decreasing in $\tau$, and the upper bounds in Proposition~\ref{prop:clique-and-bound} are non-decreasing in $\tau$, 
i.e., expected sparsity decreases with scale, corresponding to larger place fields.
\end{corollary}

\begin{proof}
Monotonicity $E_{\tau_1}\subseteq E_{\tau_2}$ implies $\deg_{\tau_1}(x)\le \deg_{\tau_2}(x)$ for all $x$, hence $\Delta(\tau_1)\le \Delta(\tau_2)$ and $\rho(\tau_1)\le \rho(\tau_2)$. Apply Proposition~\ref{prop:clique-and-bound}\,(iii).
\end{proof}

\paragraph{Interpretation.}
Lemma~\ref{lem:disjoint-support} enforces disjoint supports for non-adjacent locations; Proposition~\ref{prop:clique-and-bound} shows that each coordinate $i$ can only support a \emph{clique} of mutually adjacent positions and bounds the size of that clique by local neighborhood size. In geometric environments, the bound scales with the volume of a $\sqrt{\tau}$-ball, making the $\tau$–field-size relationship explicit. Corollary~\ref{cor:scale-localization} then formalizes the empirical trend: smaller $\tau$ $\Rightarrow$ higher sparsity (smaller fields); larger $\tau$ $\Rightarrow$ lower sparsity (larger fields).

\section{Path Planning Properties}
\label{app:navigation}

This appendix elaborates on the properties of the gradient-based navigation framework introduced in Section~\ref{sec:adaptive_navigation}.

\subsection{Scale Transition Dynamics}
The optimal scale \( \tau^* = \argmax_t \|\nabla_x q(x|y, \tau)\| \) adapts dynamically to the agent’s distance from the goal, ensuring efficient navigation. In an open field, we formalize this selection with the following theorem.

\begin{theorem}[Optimal Scale Selection in an Open Field]
\label{thm:optimal_scale}
In an open field with a Gaussian transition kernel \( p(x|y, \tau) = \frac{1}{4\pi \alpha \tau} \exp\left(-\frac{\|x - y\|^2}{4\alpha \tau}\right) \), where \( \alpha = 1/4 \), the optimal time scale \( \tau^* \) that maximizes the gradient magnitude \(\|\nabla_x q(x|y, \tau)\|\), with \( q(x|y, \tau) = {p(x|y, \tau)}/{\sqrt{p(x|x, \tau) \cdot p(y|y, \tau)}} \), is:
\begin{equation}
\tau^* = \frac{d^2(x, y)}{4\alpha} = d^2(x, y)
\end{equation}
where \( d(x, y) = \|x - y\| \) is the Euclidean distance.
\end{theorem}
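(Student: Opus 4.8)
The plan is to compute $q(x|y,t)$ explicitly in the open field, differentiate its magnitude with respect to $t$, and solve the resulting stationarity condition. First I would use the fact established in Appendix~\ref{app:open_field} that in an unbounded open field $p(x|x,t) = \frac{1}{4\pi\alpha t}$ is position-independent, so that $q(x|y,t) = \exp\!\left(-\frac{\|x-y\|^2}{4\alpha t}\right)$ — the normalizing denominator exactly cancels the prefactor $\frac{1}{4\pi\alpha t}$. Writing $r = d(x,y) = \|x-y\|$, this reduces the problem to a one-dimensional calculus exercise in $t$ with $r$ held fixed.

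Next I would compute the spatial gradient. Since $q$ depends on $x$ only through $r^2 = \|x-y\|^2$, we have $\nabla_x q = -\frac{1}{4\alpha t}\,\nabla_x(r^2)\, q = -\frac{x-y}{2\alpha t}\exp\!\left(-\frac{r^2}{4\alpha t}\right)$, so $\|\nabla_x q(x|y,t)\| = \frac{r}{2\alpha t}\exp\!\left(-\frac{r^2}{4\alpha t}\right)$. The key step is then to maximize the function $g(t) = \frac{r}{2\alpha t}\exp\!\left(-\frac{r^2}{4\alpha t}\right)$ over $t > 0$ for fixed $r$. Taking $\log g(t) = \text{const} - \log t - \frac{r^2}{4\alpha t}$ and differentiating gives $\frac{d}{dt}\log g(t) = -\frac{1}{t} + \frac{r^2}{4\alpha t^2}$, which vanishes precisely when $t = \frac{r^2}{4\alpha}$. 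A quick second-derivative or sign check confirms this is a maximum (the bracket is positive for small $t$ and negative for large $t$). Substituting $\alpha = 1/4$ yields $t^* = \frac{d^2(x,y)}{4\alpha} = d^2(x,y)$, as claimed.

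I do not anticipate a serious obstacle here; the argument is a short computation once the cancellation in the normalization is noted. The only point requiring a little care is justifying that it suffices to work with the idealized continuous Gaussian kernel rather than the discrete lattice walk — but this is exactly the $dx \to 0$ limit already invoked in Appendix~\ref{app:open_field}, so I would simply cite that reduction. A secondary cosmetic point is that one might want to state the result in terms of $\sqrt{t^*} = \sqrt{2\alpha}\cdot\frac{d(x,y)}{\sqrt{2\alpha}}$, i.e. $\sqrt{t^*} \propto d(x,y)$, to connect with property~1 of the navigation framework; this is immediate from $t^* = d^2(x,y)$.
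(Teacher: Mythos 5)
Your proposal is correct and follows essentially the same route as the paper's proof: cancel the $\frac{1}{4\pi\alpha t}$ prefactor via the position-independent normalization to get $q(x|y,t)=\exp\left(-\frac{d^2}{4\alpha t}\right)$, compute $\|\nabla_x q\| = \frac{d}{2\alpha t}\exp\left(-\frac{d^2}{4\alpha t}\right)$, and maximize over $t$ by differentiating the logarithm, with a second-derivative check confirming the maximum at $t^*=\frac{d^2}{4\alpha}=d^2$. No gaps.
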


\begin{proof}
The transition kernel is:
\begin{equation}
p(x|y, \tau) = \frac{1}{4\pi \alpha \tau} \exp\left(-\frac{d^2}{4\alpha \tau}\right), \quad d = d(x, y)
\end{equation}
Normalization gives:
\begin{equation}
p(x|x, \tau) = p(y|y, \tau) = \frac{1}{4\pi \alpha \tau}, \quad \sqrt{p(x|x, \tau) \cdot p(y|y, \tau)} = \frac{1}{4\pi \alpha \tau}
\end{equation}

Thus:
\begin{equation}
q(x|y, \tau) = \frac{p(x|y, \tau)}{\frac{1}{4\pi \alpha \tau}} = \exp\left(-\frac{d^2}{4\alpha \tau}\right)
\end{equation}

The gradient is:
\begin{equation}
\nabla_x q(x|y, \tau) = \exp\left(-\frac{d^2}{4\alpha \tau}\right) \cdot \left(-\frac{2(x - y)}{4\alpha \tau}\right) = -\frac{x - y}{2\alpha \tau} \exp\left(-\frac{d^2}{4\alpha \tau}\right)
\end{equation}

The magnitude is:
\begin{equation}
\|\nabla_x q(x|y, \tau)\| = \frac{d}{2\alpha \tau} \exp\left(-\frac{d^2}{4\alpha \tau}\right)
\end{equation}

Maximize \( f(\tau) = \frac{d}{2\alpha \tau} \exp\left(-\frac{d^2}{4\alpha \tau}\right) \). Compute:
\begin{equation}
\ln f(\tau) = \ln\left(\frac{d}{2\alpha}\right) - \ln \tau - \frac{d^2}{4\alpha \tau}
\end{equation}

Differentiate:
\begin{equation}
\frac{\partial \ln f}{\partial \tau} = -\frac{1}{\tau} + \frac{d^2}{4\alpha \tau^2} = 0 \implies \tau = \frac{d^2}{4\alpha}
\end{equation}

For \(\alpha = 1/4\), \( \tau^* = d^2 \). The second derivative at \( \tau^* \):
\begin{equation}
\frac{\partial^2 \ln f}{\partial \tau^2} = \frac{1}{\tau^2} - \frac{d^2}{2\alpha \tau^3}, \quad \text{at } \tau = \frac{d^2}{4\alpha}, \quad \frac{d^2}{2\alpha \tau} = 2, \quad \frac{\partial^2 \ln f}{\partial \tau^2} = -\frac{1}{\tau^2} < 0
\end{equation}
confirms a maximum.
\end{proof}

In an open field, \( \tau^* \propto d^2(x, y) \), so \( \tau^* \) decreases as the agent approaches the goal, focusing on finer spatial scales for precision.

\subsection{Properties of the Gradient Field}

The gradient field \(\nabla_x q(x|y, \tau)\) drives navigation.
We highlight three properties ensuring a smooth, trap-free path to the goal \( y \).

First, \( p(x|y, \tau) \) satisfies the heat equation with reflecting boundary conditions:
\begin{equation}
\frac{\partial p(x|y, \tau)}{\partial \tau} = \alpha \nabla^2 p(x|y, \tau), \quad \alpha = 1/4, \quad \frac{\partial p(x|y, \tau)}{\partial n} = 0 \text{ on } \partial \Omega_{\text{obstacles}}.
\end{equation}
For fixed \( y \) and \( \tau \), \( p(x|y, \tau) \) is smooth in \( x \), as the heat kernel (e.g., \( p(x|y, \tau) = \frac{1}{4\pi \alpha \tau} \exp\left(-\frac{\|x - y\|^2}{4\alpha \tau}\right) \) in an open field) is infinitely differentiable \citep{Grigoryan2009}. It has a unique maximum at \( x = y \).

Second, \( p(x|x, \tau) \) is smooth in \( x \) for fixed \( \tau \). In an open field, \( p(x|x, \tau) = \frac{1}{4\pi \alpha \tau} \) is constant, while in general, \( p(x|x, \tau) \) varies smoothly due to the heat kernel’s differentiability, reflecting the domain’s geometry near obstacles.

Third, since the random walk is symmetric, \(\nabla_x q(x|y, \tau) = \nabla_x q(y|x, \tau)\). The gradient field of \( q(y|x, \tau) \) is smooth, as \( q(y|x, \tau) = q(x|y, \tau) \) inherits the smoothness of \( p(x|y, \tau) \), and has a unique maximum at \( x = y \).

These properties ensure that \(\nabla_x q(y|x, \tau)\) forms a smooth field with a unique maximum at the goal, producing a continuous, trap-free path toward \( y \). This mirrors the hippocampus’s efficient navigation, where place cells encode smooth, goal-directed trajectories \citep{McNaughton2006}.

\subsection{Planned Path vs. Shortest Path}

The planned path follows the gradient \(\nabla_x q(x|y, \tau)\), 
and \( p(x|y, \tau) \) satisfies the heat equation with reflecting boundary conditions. We compare this path to the shortest (geodesic) path from \( x \) to the goal \( y \).

In an open field, the planned path is a straight line. The transition kernel is:
\begin{equation}
p(x|y, \tau) = \frac{1}{4\pi \alpha \tau} \exp\left(-\frac{\|x - y\|^2}{4\alpha \tau}\right), \quad \alpha = 1/4
\end{equation}
Since \( p(x|x, \tau) = p(y|y, \tau) = \frac{1}{4\pi \alpha \tau} \), the normalization gives:
\begin{equation}
q(x|y, \tau) = \exp\left(-\frac{\|x - y\|^2}{4\alpha \tau}\right)
\end{equation}
The gradient is:
\begin{equation}
\nabla_x q(x|y, \tau) = -\frac{x - y}{2\alpha \tau} \exp\left(-\frac{\|x - y\|^2}{4\alpha \tau}\right) \propto -(x - y)
\end{equation}
Thus, \(\frac{dx(t)}{dt} = \nabla_x q(x|y, \tau)\) follows a straight line from \( x \) to \( y \), matching the Euclidean shortest path \citep{Burago2001}.

For small \( \tau \), \( q(x|y, \tau) \) aligns with the geodesic distance:
\begin{equation}
p(x|y, \tau) \sim \frac{1}{(4\pi \alpha \tau)^{d/2}} \exp\left(-\frac{d_g^2(x, y)}{4\alpha \tau}\right)
\end{equation}
where \( d_g(x, y) \) is the geodesic distance \citep{Varadhan1967}. Since \( q(x|y, \tau) \propto p(x|y, \tau) \) in an open field, \(\nabla_x q(x|y, \tau) \propto -\nabla_x d_g^2(x, y)\), ensuring the planned path follows the shortest route, even around obstacles.

For large \( \tau \), \( q(x|y, \tau) \) emphasizes global topological connectivity over local geometric details, often producing paths superior to the shortest. The transition matrix \( P_\tau = P_1^\tau \) has spectral decomposition:
\begin{equation}
P_\tau = Q \Lambda^\tau Q^T
\end{equation}
where \( Q \) is orthogonal, and \(\Lambda = \text{diag}(\lambda_1, \ldots, \lambda_n)\) contains eigenvalues \( 0 \leq \lambda_i \leq 1 \) \citep{Coifman2006}. For large \( \tau \), dominant eigenvalues (\(\lambda_i \approx 1\)) amplify global connectivity:
\begin{equation}
p(x|y, \tau) = [P_\tau]_{xy} \approx \sum_{i: \lambda_i \approx 1} \lambda_i^\tau Q_i(x) Q_i(y)
\end{equation}
The eigenvectors \( Q_i \) corresponding to \(\lambda_i \approx 1\) encode topological features, such as major pathways and passage connectivity, invariant to deformations like stretching \citep{Coifman2006}. Thus, \(\nabla_x q(x|y, \tau)\) prioritizes routes with high connectivity (e.g., wider corridors or multiple paths), potentially deviating from the shortest path to favor robust, flexible trajectories \citep{Norris1998}. For example, in a maze, the planned path may choose a longer but more reliable route through a well-connected passage, avoiding narrow or risky shortcuts.

This topological focus aligns with neuroscience observations, where hippocampal place cells encode abstract connectivity (e.g., room layouts) over precise metrics, enabling robust navigation in complex environments \citep{McNaughton2006, Moser2008, Tolman1948}. Paths driven by topology are often ``better'' than the shortest, as they prioritize accessibility and adaptability, reflecting cognitive strategies in spatial tasks.

The planned path’s alignment with the shortest path for small \( \tau \) and its topological robustness for large \( \tau \) mirror the hippocampus’s ability to balance precision and global structure, ensuring efficient navigation across diverse environments.

\subsection{Boundary Layer Effects}

Near obstacles, the gradient field \(\nabla_x q(x|y, \tau)\)
aligns parallel to boundaries, preventing collisions. This behavior arises from the reflecting boundary condition of the heat equation governing \( p(x|y, \tau) \), with \({\partial p(x|y, \tau)}/{\partial n} = 0\) on \(\partial \Omega_{\text{obstacles}}\). The condition ensures the normal component of \(\nabla_x p(x|y, \tau)\) vanishes at boundaries, creating a flow parallel to obstacles, with strength modulated by the goal’s position and time scale \( \tau \).

\subsection{Diffraction-Like Patterns}

Diffraction, in the context of our navigation framework, refers to the bending of the gradient field \(\nabla_x q(x|y, \tau)\) around obstacles, such as corners or narrow passages, analogous to how waves bend around edges in optics or acoustics. This phenomenon arises because the transition probability \( p(x|y, \tau) \), governed by the heat equation, diffuses probability mass through available paths, concentrating gradients toward openings like passages or around corners. These diffraction-like patterns guide trajectories through high-gradient regions, ensuring efficient navigation in complex environments.

 The recursive nature of the random walk generates diffraction-like patterns:
\begin{equation}
p(x|y, \tau+1) = \sum_{z} p(x|z, \tau) \, p(z|y, 1)
\end{equation}
This convolution reflects the diffusion of probability from \( y \) to \( x \) through intermediate points \( z \), allowing \( p(x|y, \tau) \) to ``bend'' around obstacles by accumulating contributions from multiple paths \citep{Grigoryan2009}. Near a corner or passage, the heat kernel explores paths that wrap around the obstacle, creating a gradient field that converges toward navigable routes.

For a narrow passage of width \( w \), the transition probability approximates the geodesic distance \( d_g(x, y) \) through the passage. For small to moderate \( \tau \):
\begin{equation}
q(x|y, \tau) \approx  \exp\left(-\frac{d_g^2(x, y)}{4\alpha \tau}\right)
\end{equation}
The gradient is:
\begin{equation}
\nabla_x q(x|y, \tau) \approx -\frac{1}{2\alpha \tau} \exp\left(-\frac{d_g^2(x, y)}{4\alpha \tau}\right) \nabla_x d_g^2(x, y)
\end{equation}

Since \(\nabla_x d_g^2(x, y)\) points along the geodesic path through the passage, \(\nabla_x q(x|y, \tau)\) converges toward the opening, creating a funneling effect. The width of this convergence region scales with \(\sqrt{\tau}\), as the heat kernel’s diffusion spreads over a distance proportional to \(\sqrt{4\alpha \tau}\). For small \( \tau \), the gradient tightly focuses on the passage, ensuring precise navigation. For large \( \tau \), the broader diffusion enables early detection of passages from greater distances, smoothing trajectories by integrating multiple nearby paths.

At sharp corners, diffraction-like patterns emerge similarly. The heat kernel accumulates probability along paths bending around the corner, creating a curved gradient field that guides the agent past the obstacle. Unlike passages, corners lack a single geodesic path, so the gradient field spreads, resembling optical diffraction patterns where intensity peaks near edges. This spreading is more pronounced for large \( \tau \), as the heat kernel explores longer, circuitous routes.

These patterns ensure robust navigation in complex environments. The funneling effect through passages and curved trajectories around corners mirror hippocampal navigation, where place cells encode paths that navigate mazes or avoid obstacles \citep{McNaughton2006, Moser2008}. The \( \tau \)-dependent scaling reflects the hippocampus’s ability to adjust focus, balancing precision for small \( \tau \) with global awareness for large \( \tau \), enhancing adaptability in varied spatial contexts \citep{Varadhan1967}.

\subsection{Topological Invariance}

Topological invariance ensures that navigation behavior, driven by the gradient field \(\nabla_x q(x|y, \tau)\), remains consistent under continuous deformations (e.g., stretching, bending) that preserve the environment’s connectivity, such as the number of passages or loops. This robustness mirrors the hippocampus’s cognitive maps, which encode abstract spatial relationships invariant to physical distortions \citep{Tolman1948}.

The transition probability \( p(x|y, \tau) \) arises from a random walk on a discrete grid, with one-step transition matrix \( P_1 \). For time \( \tau \), the transition matrix is:
\begin{equation}
P_\tau = P_1^\tau
\end{equation}
with spectral decomposition:
\begin{equation}
P_\tau = Q \Lambda^\tau Q^T
\end{equation}
where \( Q \) is orthogonal (\( Q^T Q = I \)), and \(\Lambda = \text{diag}(\lambda_1, \ldots, \lambda_n)\) contains eigenvalues \( 0 \leq \lambda_i \leq 1 \) \citep{Coifman2006}. The eigenvectors \( Q_i \) and eigenvalues \(\lambda_i\) encode topological features, such as connectivity (number of connected components), bottlenecks (e.g., passages), and cycles. The second eigenvalue (\(\lambda_2\)) quantifies global connectivity, with higher values indicating tighter bottlenecks, while eigenvectors capture local structures like loops \citep{Grigoryan2009}.

For large \( \tau \), dominant eigenvalues (\(\lambda_i \approx 1\)) amplify global topological features:
\begin{equation}
P_\tau \approx \sum_{i: \lambda_i \approx 1} \lambda_i^\tau Q_i Q_i^T
\end{equation}
emphasizing coarse connectivity (e.g., major pathways between regions). The eigenvectors corresponding to \(\lambda_i \approx 1\) represent low-frequency modes that partition the environment into connected regions, invariant to metric distortions like stretching a corridor. This is because \( Q_i \) depends on the graph’s adjacency structure, not precise distances, ensuring stability under deformations that preserve connectivity \citep{Coifman2006}. For example, stretching a corridor adjusts transition probabilities proportionally, but the eigenvectors \( Q_i \) retain the same connectivity patterns, as they reflect the graph’s topology.

The gradient field is $\nabla_x q(x|y, \tau)$, where \( p(x|y, \tau) = [P_\tau]_{xy} \). Since \( P_\tau \)’s eigenvectors are topologically invariant, \(\nabla_x q(x|y, \tau)\) preserves navigation paths (e.g., through passages) across equivalent environments. For large \( \tau \), the dominance of \(\lambda_i \approx 1\) ensures that \(\nabla_x q(x|y, \tau)\) prioritizes global connectivity, guiding trajectories along major routes regardless of local geometry.

This invariance aligns with hippocampal navigation, where place cells encode connectivity (e.g., room layouts) over precise metrics, enabling robust path planning under environmental changes \citep{Moser2008}. The focus on topology for large \( \tau \), driven by eigenvectors tied to \(\lambda_i \approx 1\), reflects cognitive maps that emphasize structural relationships, as observed in rodent navigation \citep{McNaughton2006, Tolman1948}.

\subsection{Hippocampal Preplay and Shortcut Detection}

Our path planning framework discovers novel shortcuts using only local exploration, mirroring hippocampal preplay, where neural sequences predict unexplored paths \citep{Dragoi2011, Olafsdottir2015}. This process relies on efficient computation of transition probabilities and adaptive scale selection, without memorizing past successful paths.

The one-step transition matrix \( P_1 \) defines probabilities of moving between adjacent grid points. Multi-step transition matrices \( P_\tau = P_1^\tau \) for scales \( \tau = 2^k \) (\( k = 1, 2, \ldots, K \)) are computed via matrix squaring:
\begin{equation}
P_2 = P_1^2, \quad P_4 = P_2^2, \quad P_8 = P_4^2, \quad \ldots, \quad P_{2^k} = P_{2^{k-1}}^2
\end{equation}
requiring \( \log_2 \tau \) operations. 

The path planning algorithm adaptively selects \( \tau^* = \argmax_t \|\nabla_x q(x|y, \tau)\|\), guiding trajectories via \(\nabla_x q(x|y, \tau^*)\). For regions \( A \) and \( B \) connected by an unexplored passage, local exploration defines \( P_1 \). Matrix squaring yields \( P_\tau \), and adaptive \( \tau \)-selection ensures \( q(a|b, \tau^*) > 0 \), discovering the shortcut without prior traversal.

This process—matrix squaring to compute \( P_\tau \) and adaptive \( \tau \)-selection—embodies hippocampal preplay, predicting connectivity from local data, akin to place cell sequence pre-activation. The framework’s efficiency and reliance on \( P_1 \) enhance its biological plausibility, supporting robust navigation.

\section{Theta Phase Modeling}
\label{app:theta_phase}

We propose a local definition of theta phase that embeds spatial adjacency within the phase angle, capturing the observed precession range of 270\textdegree{} to 90\textdegree{} while enabling navigation across a tessellated environment. This formulation leverages an external ~8 Hz theta rhythm for temporal pacing, where $t$ denotes real time during navigation, distinct from the random walk time scale $\tau$, and aligns with distributed hippocampal computation without requiring individual cells to explicitly encode spatial centers.

\subsection{From Inner Product Angle to Theta Phase}

The emergent sparsity from non-negativity and orthogonality constraints endows each cell in $h(x, \tau)$ a localized place field, so that each cell $i$ has a place field centered at $\mu_i$. The place fields of all the cells tile the whole environment. At the current position $x(t)$ at real time $t$, the adjacency between $x(t)$ and each $\mu_i$ is measured by the angle between $h(x(t), \tau)$ and $h(\mu_i, \tau)$, and this angle can then be embedded as the theta phase.

Specifically, define the adjacency of the current position $x(t)$ to the field center of the $i$-th place cell at scale $\tau$:
\begin{equation}
a_i(t) = \langle h(x(t), \tau), h(\mu_i, \tau) \rangle
\end{equation}
where $h(x(t), \tau) \in \mathbb{R}^n$ is the population embedding (firing rates of $n$ place cells) at position $x(t)$, and $h(\mu_i, \tau)$ represents the embedding at the center $\mu_i$, where 
\begin{equation}
h_i(\mu_i, \tau) = \max_{x \in \mathcal{X}} h_i(x, \tau)
\end{equation}
We interpret $h(\mu_i, \tau) = [w_{1i}, w_{2i}, \ldots, w_{ni}]$ as the recurrent connection weights from the population to neuron $i$, such that:
\begin{equation}
a_i(t) = \sum_{j=1}^n h_j(x(t), \tau) w_{ji}
\end{equation}
representing the net recurrent input to neuron $i$. This corresponds to the normalized transition probability $q(x(t) | \mu_i, \tau)$, and in open fields (Appendix~\ref{app:open_field}), it follows a Gaussian profile:
\begin{equation}
a_i(t) = \exp\left( -\frac{\|x(t) - \mu_i\|^2}{\tau} \right)
\end{equation}

The theta phase is defined as:
\begin{equation}
\phi_i(x(t), \tau) = \pi + \text{sgn}\left( \frac{d a_i(t)}{dt} \right) \cdot \left( \frac{\pi}{2} - \arccos(a_i(t)) \right)
\end{equation}
where:
$\frac{d a_i(t)}{dt} = \sum_{j=1}^n \frac{d h_j(x(t), \tau)}{dt} w_{ji}$ is the temporal rate of change of this recurrent input, reflecting movement toward ($> 0$) or away ($< 0$) from $\mu_i$ as encoded by the weights. $\arccos(a_i(t)) \in [0, \pi/2]$ is the angle between $h(x(t), \tau)$ and $h(\mu_i, \tau)$, and this angle is embedded into the phase, possibly by sinusoidal inhibitory mechanism \citep{Kamondi1998, Harvey2009}.

The phase progresses as: 
\begin{enumerate}
\item At field entry ($a_i \to 0$, $\frac{d a_i}{dt} > 0$): $\phi_i = \frac{3\pi}{2}$ (270\textdegree{}), late phase.
\item  At field center ($a_i = 1$, $\frac{d a_i}{dt} = 0$): $\phi_i = \pi$ (180\textdegree{}), mid-phase.
\item  At field exit ($a_i \to 0$, $\frac{d a_i}{dt} < 0$): $\phi_i = \frac{\pi}{2}$ (90\textdegree{}), early phase.
\end{enumerate}
This does not require neuron $i$ to encode $\mu_i$ explicitly; rather, $h(\mu_i, \tau)$ emerges as a distributed representation within recurrent connectivity, learned through experience and computed collectively, with $\phi_i$ storing $a_i(t)$ as an angular signal.

\subsection{Implications for Navigation}

This phase also enables navigation by tessellating the environment with $\{\mu_i\}$. Each $\mu_i$ anchors a tile, and $\phi_i$ signals adjacency: 180\textdegree{} indicates proximity to $\mu_i$, while 270\textdegree{} and 90\textdegree{} mark entry and exit. For a goal at $\mu_g$, navigating toward $\phi_g = \pi$ (max $a_g$) guides movement, with $\{\phi_i\}$ identifying intermediate $\mu_i$ to approach $\mu_g$. This discretizes continuous navigation into a phase-driven sequence over $\{\mu_i\}$, complementing our gradient-based approach with a recurrently-driven strategy. During rest, iterating $\phi_i$ at ~8 Hz drives replay, with $\frac{d a_i}{dt}$ reflecting virtual transitions, unifying coding and navigation.

By tessellating the environment with $\{\mu_i\}$ and guiding navigation via $\{\phi_i\}$, this approach unifies spatial representation, temporal coding, and path planning within the embedding framework, offering an elegant, biologically plausible mechanism rooted in recurrent connectivity and the random walk model.

\section{Grid Cells Integration}
\label{app:grid_cells}

\subsection{Grid Cells as a Conformal, Multi-Scale Basis}

While place cells form the primary focus of our model, we establish a connection to grid cells in the medial entorhinal cortex. Grid cells exhibit hexagonal firing patterns across multiple scales \citep{Hafting2005, Moser2008} and are thought to provide a universal metric that feeds into the more adaptive place cell system.

Let $g(x)$ be the vector formed by the firing rates of the population of grid cells at position $x$. The high dimensional $g(x)$ forms an embedding of 2D position $x$. The grid cell representation $g(x)$ provides an effective basis for learning the transformation matrix $W(\tau)$ to produce place cell embeddings $h(x, \tau) = W(\tau) g(x)$ that approximate $q(y|x, \tau)$. Its effectiveness stems from conformal, isotropic, and multi-scale properties of grid cells, aligning with the hippocampus's multi-resolution encoding \citep{Stensola2012}.

While place cell representation $h(x, \tau)$ preserves the proximity within the environment, $g(x)$ preserves the local distance and has the property of conformal isometry. 

\textit{Conformal isometry} \citep{xuconformal,xu2022conformal,xu2023emergence,gao2021path,gaolearning} is defined as:
\begin{equation}
\| g(x + dx) - g(x) \| = s \| dx \| + o(\| dx \|)
\label{eq:conformal_isometry}
\end{equation}
where $s$ is a scaling factor, and $o(\| dx \|)$ denotes higher-order terms vanishing as $\| dx \| \to 0$, ensuring local distance and angle preservation and isotropic scaling. 

$s$ plays the role of metric. Grid cell population consists of multiple modules, with each module corresponding to a metric $s$, mirroring the multi-scale property of place cells. 

\subsection{Implementation of Grid-to-Place Transformation}

The place cell embedding is modeled as $h(x, \tau) = W(\tau) g(x)$, or more generally $h(x, \tau) ={\rm  Norm}({\rm ReLU}(W(\tau) g(x) + b(\tau)))$ where the transformation matrix $W(\tau)$ is learned to optimize the proximity-preserving property:
\begin{equation}
\langle W(\tau) g(x), W(\tau) g(y) \rangle \approx q(y|x, \tau)
\end{equation}
 The matrix $W(\tau)$ serves a dual purpose: (1) it captures the appropriate spatial scale by weighting the contribution of each grid cell module to match the place cell scale $\tau$, and (2) it adjusts for deviations from isotropy and isometry in grid cells. $g(x)$ facilitates fast adaptation to the environment in that learning $W(\tau)$ may be more efficient than learning $h(x, \tau)$ directly. 
 
 To further enhance adaptability, we can backpropagate gradients to $g(x)$. This allows the grid cell representation to deform in response to environmental constraints, simulating the dynamic remapping observed in biological grid cells \citep{Fyhn2007}.

This formulation provides a computational account of how place field embeddings adapt to environmental constraints based on a flexible metric provided by grid cells. The transformation matrix $W(\tau)$, combined with the learnable parameters of $g(x)$, effectively functions as a computational cognitive map, translating the multi-scale grid metric into place representations that reflect both the appropriate spatial scale and deviations from isometry and isotropy. 

\section{Implementation Details} \label{appen:training}

All the models were trained on a single NVIDIA A6000 GPU for 2,000 iterations with a learning rate of 0.001. For fine-tuning, we use 50 iterations with a 0.0005 learning rate. All models contain 500 cells. For batch size, we learn all combinations of different locations in the field for each iteration. The training time is less than 5 minutes. 

We evaluate planning using two metrics: (1) Success, whether the agent reaches within $1$ unit of the goal, and (2) Success weighted by inverse Path Length (SPL)~\citep{anderson2018evaluation}. For an agent's path $[x_1, \ldots, x_T]$ with initial geodesic distance $d_i$ for episode $i$ (as computed by baseline algorithms), we compute:

\begin{equation}
l_i = \sum_{\tau=2}^{T} \|x_\tau - x_{\tau-1}\|_2
\end{equation}

Then SPL for episode $i$ is defined as:

\begin{equation}
\text{SPL}_i = \text{Success}_i \cdot \frac{d_i}{l_i}
\end{equation}

We report SPL as the average of $\text{SPL}_i$ across all episodes. In Table~\ref{table:path_plan}, we treat Bug algorithm with oracle as the baseline and in Table~\ref{table:path_plan_four_room}, our method serves as the baseline.

\section{Additional Experiment Results}\label{appen:plots}

\subsection{Additional Path Planning Results in Four-Room Environment}
\begin{figure}[h]
    \centering
    \includegraphics[width=\textwidth]{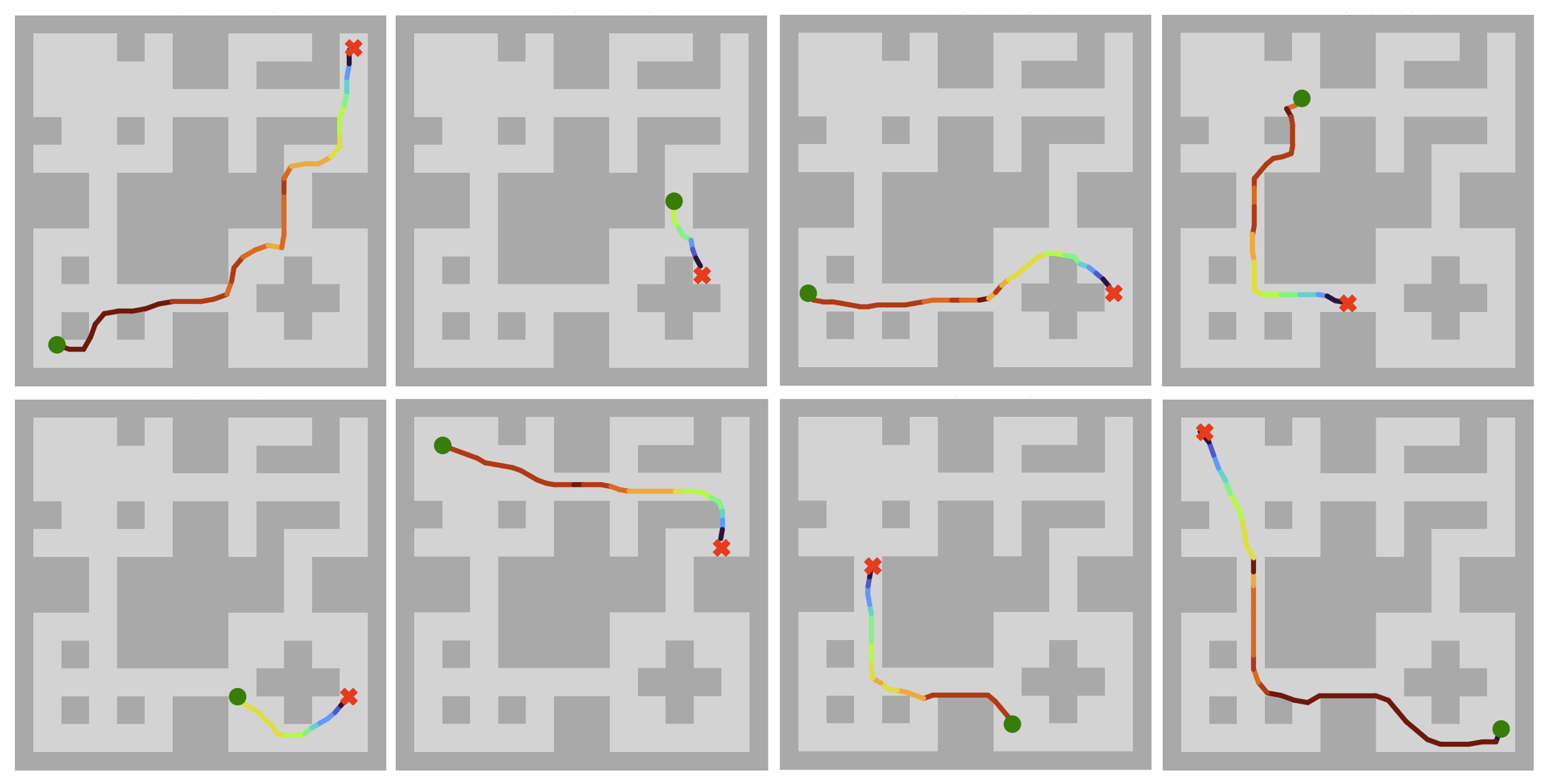}
    \caption{Successful navigation across eight randomly sampled start-goal configurations in the four-room environment.}
    \label{fig:path-maze}
\end{figure}

To further demonstrate the robustness of our path planning framework, here we provide additional navigation trials in the four-room environment with randomly sampled start and goal locations. Figure~\ref{fig:path-maze} shows eight scenarios spanning diverse spatial configurations. In all cases, our gradient-based path planning with adaptive scale selection successfully generated smooth, obstacle-avoiding trajectories that reached the goal. These results, combined with the quantitative evaluation in Table 1 showing 100\% success rate across 50 random trials, demonstrate that our model achieves reliable navigation across diverse start-goal configurations even in complex multi-room environments.

\subsection{Remapping}

We evaluated our model's ability to adapt to bidirectional environmental modifications in the four-room environment (our most complex tested scenario). In addition to the obstacle removal experiment shown in Figure~\ref{fig:maze_env} (D), we conducted the complementary scenario: initially, rooms were connected allowing direct northward navigation; we then added an obstacle blocking this direct pathway and fine-tuned the model with 100 iterations at learning rate $0.01$. The agent successfully adapted, discovering alternative routes through adjacent rooms to reach the target (see Figure~\ref{fig:remap-block}). Together, these experiments demonstrate topology-dependent remapping -- a well-documented phenomenon where place fields reorganize in response to changes in spatial connectivity. By both removing obstacles (creating novel shortcuts) and adding obstacles (blocking direct paths), we show that minimal fine-tuning enables the position embeddings to adapt to environmental changes that alter topological connectivity, whether exploiting new connections or navigating around new barriers. This bidirectional adaptability aligns with hippocampal responses to environmental modifications.

\begin{figure}[h]
    \centering
    \includegraphics[width=\textwidth]{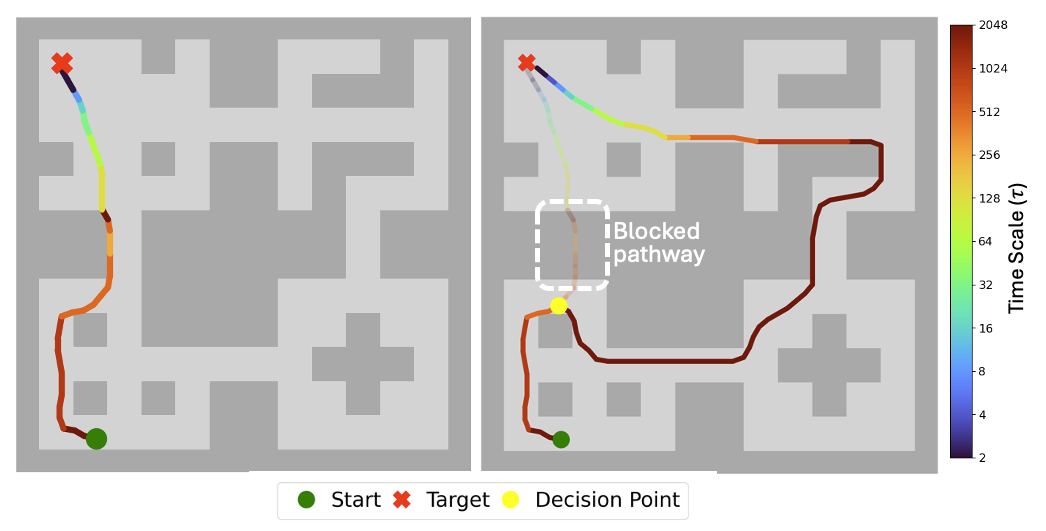}
    \caption{Remapping and path adaptation in response to added obstacles in the four-room environment.}
    \label{fig:remap-block}
\end{figure}


\newpage
\section*{NeurIPS Paper Checklist}

\begin{enumerate}

\item {\bf Claims}
    \item[] Question: Do the main claims made in the abstract and introduction accurately reflect the paper's contributions and scope?
    \item[] Answer: \answerYes{}
    \item[] Justification: The claims in the abstract and introduction (Section \ref{sec:intro}) reflect the contributions and scope of the paper.
    \item[] Guidelines:
    \begin{itemize}
        \item The answer NA means that the abstract and introduction do not include the claims made in the paper.
        \item The abstract and/or introduction should clearly state the claims made, including the contributions made in the paper and important assumptions and limitations. A No or NA answer to this question will not be perceived well by the reviewers. 
        \item The claims made should match theoretical and experimental results, and reflect how much the results can be expected to generalize to other settings. 
        \item It is fine to include aspirational goals as motivation as long as it is clear that these goals are not attained by the paper. 
    \end{itemize}

\item {\bf Limitations}
    \item[] Question: Does the paper discuss the limitations of the work performed by the authors?
    \item[] Answer: \answerYes{}.
    \item[] Justification: We explicitly discuss limitations of our approach in Appendix~\ref{app:limitations}.
    \item[] Guidelines:
    \begin{itemize}
        \item The answer NA means that the paper has no limitation while the answer No means that the paper has limitations, but those are not discussed in the paper. 
        \item The authors are encouraged to create a separate "Limitations" section in their paper.
        \item The paper should point out any strong assumptions and how robust the results are to violations of these assumptions (e.g., independence assumptions, noiseless settings, model well-specification, asymptotic approximations only holding locally). The authors should reflect on how these assumptions might be violated in practice and what the implications would be.
        \item The authors should reflect on the scope of the claims made, e.g., if the approach was only tested on a few datasets or with a few runs. In general, empirical results often depend on implicit assumptions, which should be articulated.
        \item The authors should reflect on the factors that influence the performance of the approach. For example, a facial recognition algorithm may perform poorly when image resolution is low or images are taken in low lighting. Or a speech-to-text system might not be used reliably to provide closed captions for online lectures because it fails to handle technical jargon.
        \item The authors should discuss the computational efficiency of the proposed algorithms and how they scale with dataset size.
        \item If applicable, the authors should discuss possible limitations of their approach to address problems of privacy and fairness.
        \item While the authors might fear that complete honesty about limitations might be used by reviewers as grounds for rejection, a worse outcome might be that reviewers discover limitations that aren't acknowledged in the paper. The authors should use their best judgment and recognize that individual actions in favor of transparency play an important role in developing norms that preserve the integrity of the community. Reviewers will be specifically instructed to not penalize honesty concerning limitations.
    \end{itemize}

\item {\bf Theory assumptions and proofs}
    \item[] Question: For each theoretical result, does the paper provide the full set of assumptions and a complete (and correct) proof?
    \item[] Answer: \answerYes{}
    \item[] Justification: We provide the full set of assumptions and complete proofs for our theoretical results in the Appendix.
    \item[] Guidelines:
    \begin{itemize}
        \item The answer NA means that the paper does not include theoretical results. 
        \item All the theorems, formulas, and proofs in the paper should be numbered and cross-referenced.
        \item All assumptions should be clearly stated or referenced in the statement of any theorems.
        \item The proofs can either appear in the main paper or the supplemental material, but if they appear in the supplemental material, the authors are encouraged to provide a short proof sketch to provide intuition. 
        \item Inversely, any informal proof provided in the core of the paper should be complemented by formal proofs provided in appendix or supplemental material.
        \item Theorems and Lemmas that the proof relies upon should be properly referenced. 
    \end{itemize}

    \item {\bf Experimental result reproducibility}
    \item[] Question: Does the paper fully disclose all the information needed to reproduce the main experimental results of the paper to the extent that it affects the main claims and/or conclusions of the paper (regardless of whether the code and data are provided or not)?
    \item[] Answer: \answerYes{}
    \item[] Justification: Section \ref{sec:experiments} specifies all training and evaluation details. We also include link to our experiment code to ensure reproduceability.
    \item[] Guidelines:
    \begin{itemize}
        \item The answer NA means that the paper does not include experiments.
        \item If the paper includes experiments, a No answer to this question will not be perceived well by the reviewers: Making the paper reproducible is important, regardless of whether the code and data are provided or not.
        \item If the contribution is a dataset and/or model, the authors should describe the steps taken to make their results reproducible or verifiable. 
        \item Depending on the contribution, reproducibility can be accomplished in various ways. For example, if the contribution is a novel architecture, describing the architecture fully might suffice, or if the contribution is a specific model and empirical evaluation, it may be necessary to either make it possible for others to replicate the model with the same dataset, or provide access to the model. In general. releasing code and data is often one good way to accomplish this, but reproducibility can also be provided via detailed instructions for how to replicate the results, access to a hosted model (e.g., in the case of a large language model), releasing of a model checkpoint, or other means that are appropriate to the research performed.
        \item While NeurIPS does not require releasing code, the conference does require all submissions to provide some reasonable avenue for reproducibility, which may depend on the nature of the contribution. For example
        \begin{enumerate}
            \item If the contribution is primarily a new algorithm, the paper should make it clear how to reproduce that algorithm.
            \item If the contribution is primarily a new model architecture, the paper should describe the architecture clearly and fully.
            \item If the contribution is a new model (e.g., a large language model), then there should either be a way to access this model for reproducing the results or a way to reproduce the model (e.g., with an open-source dataset or instructions for how to construct the dataset).
            \item We recognize that reproducibility may be tricky in some cases, in which case authors are welcome to describe the particular way they provide for reproducibility. In the case of closed-source models, it may be that access to the model is limited in some way (e.g., to registered users), but it should be possible for other researchers to have some path to reproducing or verifying the results.
        \end{enumerate}
    \end{itemize}

\item {\bf Open access to data and code}
    \item[] Question: Does the paper provide open access to the data and code, with sufficient instructions to faithfully reproduce the main experimental results, as described in supplemental material?
    \item[] Answer: \answerYes{}
    \item[] Justification: Our code is publicly available at \url{https://github.com/mingluzhao/Place_Cell}.
    \item[] Guidelines:
    \begin{itemize}
        \item The answer NA means that paper does not include experiments requiring code.
        \item Please see the NeurIPS code and data submission guidelines (\url{https://nips.cc/public/guides/CodeSubmissionPolicy}) for more details.
        \item While we encourage the release of code and data, we understand that this might not be possible, so “No” is an acceptable answer. Papers cannot be rejected simply for not including code, unless this is central to the contribution (e.g., for a new open-source benchmark).
        \item The instructions should contain the exact command and environment needed to run to reproduce the results. See the NeurIPS code and data submission guidelines (\url{https://nips.cc/public/guides/CodeSubmissionPolicy}) for more details.
        \item The authors should provide instructions on data access and preparation, including how to access the raw data, preprocessed data, intermediate data, and generated data, etc.
        \item The authors should provide scripts to reproduce all experimental results for the new proposed method and baselines. If only a subset of experiments are reproducible, they should state which ones are omitted from the script and why.
        \item At submission time, to preserve anonymity, the authors should release anonymized versions (if applicable).
        \item Providing as much information as possible in supplemental material (appended to the paper) is recommended, but including URLs to data and code is permitted.
    \end{itemize}

\item {\bf Experimental setting/details}
    \item[] Question: Does the paper specify all the training and test details (e.g., data splits, hyperparameters, how they were chosen, type of optimizer, etc.) necessary to understand the results?
    \item[] Answer: \answerYes{}
    \item[] Justification: Section \ref{sec:experiments} specifies all training and evaluation details.
    \item[] Guidelines:
    \begin{itemize}
        \item The answer NA means that the paper does not include experiments.
        \item The experimental setting should be presented in the core of the paper to a level of detail that is necessary to appreciate the results and make sense of them.
        \item The full details can be provided either with the code, in appendix, or as supplemental material.
    \end{itemize}

\item {\bf Experiment statistical significance}
    \item[] Question: Does the paper report error bars suitably and correctly defined or other appropriate information about the statistical significance of the experiments?
    \item[] Answer: \answerYes{}
    \item[] Justification: We provide experiment results with statistical significance in Section \ref{sec:experiments}.
    \item[] Guidelines:
    \begin{itemize}
        \item The answer NA means that the paper does not include experiments.
        \item The authors should answer "Yes" if the results are accompanied by error bars, confidence intervals, or statistical significance tests, at least for the experiments that support the main claims of the paper.
        \item The factors of variability that the error bars are capturing should be clearly stated (for example, train/test split, initialization, random drawing of some parameter, or overall run with given experimental conditions).
        \item The method for calculating the error bars should be explained (closed form formula, call to a library function, bootstrap, etc.)
        \item The assumptions made should be given (e.g., Normally distributed errors).
        \item It should be clear whether the error bar is the standard deviation or the standard error of the mean.
        \item It is OK to report 1-sigma error bars, but one should state it. The authors should preferably report a 2-sigma error bar than state that they have a 96\% CI, if the hypothesis of Normality of errors is not verified.
        \item For asymmetric distributions, the authors should be careful not to show in tables or figures symmetric error bars that would yield results that are out of range (e.g. negative error rates).
        \item If error bars are reported in tables or plots, The authors should explain in the text how they were calculated and reference the corresponding figures or tables in the text.
    \end{itemize}

\item {\bf Experiments compute resources}
    \item[] Question: For each experiment, does the paper provide sufficient information on the computer resources (type of compute workers, memory, time of execution) needed to reproduce the experiments?
    \item[] Answer: \answerYes{}
    \item[] Justification: We mention the compute usage and training details in the Section \ref{sec:experiments} and the Appendix~\ref{appen:training}. 
    \item[] Guidelines:
    \begin{itemize}
        \item The answer NA means that the paper does not include experiments.
        \item The paper should indicate the type of compute workers CPU or GPU, internal cluster, or cloud provider, including relevant memory and storage.
        \item The paper should provide the amount of compute required for each of the individual experimental runs as well as estimate the total compute. 
        \item The paper should disclose whether the full research project required more compute than the experiments reported in the paper (e.g., preliminary or failed experiments that didn't make it into the paper). 
    \end{itemize}
    
\item {\bf Code of ethics}
    \item[] Question: Does the research conducted in the paper conform, in every respect, with the NeurIPS Code of Ethics \url{https://neurips.cc/public/EthicsGuidelines}?
    \item[] Answer: \answerYes{}
    \item[] Justification: Our research fully conforms to the NeurIPS Code of Ethics. We conduct purely algorithmic simulations with no human data or private information.
    \item[] Guidelines:
    \begin{itemize}
        \item The answer NA means that the authors have not reviewed the NeurIPS Code of Ethics.
        \item If the authors answer No, they should explain the special circumstances that require a deviation from the Code of Ethics.
        \item The authors should make sure to preserve anonymity (e.g., if there is a special consideration due to laws or regulations in their jurisdiction).
    \end{itemize}

\item {\bf Broader impacts}
    \item[] Question: Does the paper discuss both potential positive societal impacts and negative societal impacts of the work performed?
    \item[] Answer: \answerNA{}
    \item[] Justification: There is no societal impact of the work performed.
    \item[] Guidelines:
    \begin{itemize}
        \item The answer NA means that there is no societal impact of the work performed.
        \item If the authors answer NA or No, they should explain why their work has no societal impact or why the paper does not address societal impact.
        \item Examples of negative societal impacts include potential malicious or unintended uses (e.g., disinformation, generating fake profiles, surveillance), fairness considerations (e.g., deployment of technologies that could make decisions that unfairly impact specific groups), privacy considerations, and security considerations.
        \item The conference expects that many papers will be foundational research and not tied to particular applications, let alone deployments. However, if there is a direct path to any negative applications, the authors should point it out. For example, it is legitimate to point out that an improvement in the quality of generative models could be used to generate deepfakes for disinformation. On the other hand, it is not needed to point out that a generic algorithm for optimizing neural networks could enable people to train models that generate Deepfakes faster.
        \item The authors should consider possible harms that could arise when the technology is being used as intended and functioning correctly, harms that could arise when the technology is being used as intended but gives incorrect results, and harms following from (intentional or unintentional) misuse of the technology.
        \item If there are negative societal impacts, the authors could also discuss possible mitigation strategies (e.g., gated release of models, providing defenses in addition to attacks, mechanisms for monitoring misuse, mechanisms to monitor how a system learns from feedback over time, improving the efficiency and accessibility of ML).
    \end{itemize}
    
\item {\bf Safeguards}
    \item[] Question: Does the paper describe safeguards that have been put in place for responsible release of data or models that have a high risk for misuse (e.g., pretrained language models, image generators, or scraped datasets)?
    \item[] Answer: \answerNA{}
    \item[] Justification: Our work presents a computational algorithm with no high-risk data or models; safeguards are not applicable.
    \item[] Guidelines:
    \begin{itemize}
        \item The answer NA means that the paper poses no such risks.
        \item Released models that have a high risk for misuse or dual-use should be released with necessary safeguards to allow for controlled use of the model, for example by requiring that users adhere to usage guidelines or restrictions to access the model or implementing safety filters. 
        \item Datasets that have been scraped from the Internet could pose safety risks. The authors should describe how they avoided releasing unsafe images.
        \item We recognize that providing effective safeguards is challenging, and many papers do not require this, but we encourage authors to take this into account and make a best faith effort.
    \end{itemize}

\item {\bf Licenses for existing assets}
    \item[] Question: Are the creators or original owners of assets (e.g., code, data, models), used in the paper, properly credited and are the license and terms of use explicitly mentioned and properly respected?
    \item[] Answer: \answerNA{}
    \item[] Justification: Our paper does not use existing assets.
    \item[] Guidelines:
    \begin{itemize}
        \item The answer NA means that the paper does not use existing assets.
        \item The authors should cite the original paper that produced the code package or dataset.
        \item The authors should state which version of the asset is used and, if possible, include a URL.
        \item The name of the license (e.g., CC-BY 4.0) should be included for each asset.
        \item For scraped data from a particular source (e.g., website), the copyright and terms of service of that source should be provided.
        \item If assets are released, the license, copyright information, and terms of use in the package should be provided. For popular datasets, \url{paperswithcode.com/datasets} has curated licenses for some datasets. Their licensing guide can help determine the license of a dataset.
        \item For existing datasets that are re-packaged, both the original license and the license of the derived asset (if it has changed) should be provided.
        \item If this information is not available online, the authors are encouraged to reach out to the asset's creators.
    \end{itemize}

\item {\bf New assets}
    \item[] Question: Are new assets introduced in the paper well documented and is the documentation provided alongside the assets?
    \item[] Answer: \answerNA{}
    \item[] Justification: We do not propose new datasets or pretrained models; our work focuses on an algorithmic method.
    \item[] Guidelines:
    \begin{itemize}
        \item The answer NA means that the paper does not release new assets.
        \item Researchers should communicate the details of the dataset/code/model as part of their submissions via structured templates. This includes details about training, license, limitations, etc. 
        \item The paper should discuss whether and how consent was obtained from people whose asset is used.
        \item At submission time, remember to anonymize your assets (if applicable). You can either create an anonymized URL or include an anonymized zip file.
    \end{itemize}

\item {\bf Crowdsourcing and research with human subjects}
    \item[] Question: For crowdsourcing experiments and research with human subjects, does the paper include the full text of instructions given to participants and screenshots, if applicable, as well as details about compensation (if any)? 
    \item[] Answer: \answerNA{}
    \item[] Justification: No human subjects or crowdsourced data were involved in this research.
    \item[] Guidelines:
    \begin{itemize}
        \item The answer NA means that the paper does not involve crowdsourcing nor research with human subjects.
        \item Including this information in the supplemental material is fine, but if the main contribution of the paper involves human subjects, then as much detail as possible should be included in the main paper. 
        \item According to the NeurIPS Code of Ethics, workers involved in data collection, curation, or other labor should be paid at least the minimum wage in the country of the data collector. 
    \end{itemize}

\item {\bf Institutional review board (IRB) approvals or equivalent for research with human subjects}
    \item[] Question: Does the paper describe potential risks incurred by study participants, whether such risks were disclosed to the subjects, and whether Institutional Review Board (IRB) approvals (or an equivalent approval/review based on the requirements of your country or institution) were obtained?
    \item[] Answer: \answerNA{}
    \item[] Justification: No human subjects research was conducted.
    \item[] Guidelines:
    \begin{itemize}
        \item The answer NA means that the paper does not involve crowdsourcing nor research with human subjects.
        \item Depending on the country in which research is conducted, IRB approval (or equivalent) may be required for any human subjects research. If you obtained IRB approval, you should clearly state this in the paper. 
        \item We recognize that the procedures for this may vary significantly between institutions and locations, and we expect authors to adhere to the NeurIPS Code of Ethics and the guidelines for their institution. 
        \item For initial submissions, do not include any information that would break anonymity (if applicable), such as the institution conducting the review.
    \end{itemize}

\item {\bf Declaration of LLM usage}
    \item[] Question: Does the paper describe the usage of LLMs if it is an important, original, or non-standard component of the core methods in this research? Note that if the LLM is used only for writing, editing, or formatting purposes and does not impact the core methodology, scientific rigorousness, or originality of the research, declaration is not required.
    \item[] Answer: \answerNo{}
    \item[] Justification: No large language models were used as a component of the core methodology of this research.
    \item[] Guidelines:
    \begin{itemize}
        \item The answer NA means that the core method development in this research does not involve LLMs as any important, original, or non-standard components.
        \item Please refer to our LLM policy (\url{https://neurips.cc/Conferences/2025/LLM}) for what should or should not be described.
    \end{itemize}

\end{enumerate}

\end{document}